\newtheorem{theorem}{Theorem}
\newtheorem{lemma}{Lemma}
\newtheorem{proposition}{Proposition}
\date{\today}
\title{A Simple and Robust Multi-Fidelity Data Fusion Method for Effective Modelling of Citizen-Science Air Pollution Data}
\author[1]{Camilla Andreozzi}
\author[2]{Pietro Colombo}
\author[2]{Philipp Otto\footnote{Corresponding author: philipp.otto@glasgow.ac.uk}}
\affil[1]{ETH Zurich}
\affil[2]{University of Glasgow, School of Mathematics and Statistics}
\begin{document}

\maketitle

\abstract{We propose a robust multi-fidelity Gaussian process for integrating sparse, high-quality reference monitors with dense but noisy citizen-science sensors. The approach replaces the Gaussian log-likelihood in the high-fidelity channel with a global Huber loss applied to precision-weighted residuals, yielding bounded influence on all parameters, including the cross-fidelity coupling, while retaining the flexibility of co-kriging. We establish attenuation and unbounded influence of the Gaussian maximum likelihood estimator under low-fidelity contamination and derive explicit finite bounds for the proposed estimator that clarify how whitening and mean-shift sensitivity determine robustness. Monte Carlo experiments with controlled contamination show that the robust estimator maintains stable MAE and RMSE as anomaly magnitude and frequency increase, whereas the Gaussian MLE deteriorates rapidly. In an empirical study of PM\(_{2.5}\) concentrations in Hamburg, combining UBA monitors with openSenseMap data, the method consistently improves cross-validated predictive accuracy and yields coherent uncertainty maps without relying on auxiliary covariates. The framework remains computationally scalable through diagonal or low-rank whitening and is fully reproducible with publicly available code.}
\emph{Keywords}: Air quality modelling, Data fusion, Low-cost sensors, Multi-fidelity, Spatiotemporal data science

\section{Introduction}\label{sec:intro}

Reliable quantification of urban air pollution is essential for environmental assessment and policy design\footnote{See, for example, the UNDP blog on citizen engagement with low-cost sensors: \url{https://www.undp.org/policy-centre/singapore/blog/applications-low-cost-air-quality-sensors-citizen-engagement-and-air-pollution-mitigation}.}. In many cities, the availability of official high-precision monitoring sites remains limited, while extensive low-cost sensor networks provide complementary yet heterogeneous information. In Germany, for instance, the \emph{openSenseMap} platform \citep{pfeil2018opensensemap} enables citizen scientists to collect and share air-quality measurements using low-cost sensors that could complement official stations operated by the German Environment Agency (UBA). To address the resulting data quality and coverage challenges, recent research has increasingly adopted data-fusion strategies that integrate citizen observations with satellite retrievals and reference monitors to reduce bias and enhance predictive reliability. These approaches span both complex machine-learning architectures \citep{ghahremanloo2021deep,fu2023machine} and probabilistic frameworks designed to improve uncertainty quantification \citep{malings2024air}. Although such multi-fidelity data hold great potential for improving spatiotemporal coverage, they also introduce new statistical challenges: low-cost sensors often exhibit higher noise levels, change points, or sporadic outliers, and their error structure is rarely well characterised. Typical examples of these anomalies are illustrated in Figure~\ref{fig:lf_anomalies}, which shows PM$_{2.5}$ series from selected openSenseMap stations in Hamburg with abrupt level shifts and isolated extreme spikes. When the number of low-fidelity sensors grows into the tens or hundreds, systematic pre-filtering and manual quality control of individual time series become increasingly impractical. In particular, in an online monitoring context, where data streams are updated continuously and must be assimilated in real time, manual anomaly detection or sensor-specific calibration becomes infeasible. Consequently, conventional Gaussian process (GP) fusion models, which assume jointly Gaussian residuals and homogeneous noise, can become unstable or misleading when trained on contaminated observations.

Another aspect, more broadly related to spatio-temporal prediction, concerns the heavy reliance of many models on exogenous covariates. As a result, their predictive performance often does not generalize to regions where such predictors are unavailable or sparse. This limitation—frequently overlooked in spatio-temporal studies—fundamentally constrains the generalizability and practical applicability of new modeling approaches.

\begin{figure}
    \centering
    \begin{subfigure}[b]{0.45\textwidth}
        \centering
        \includegraphics[width=\textwidth]{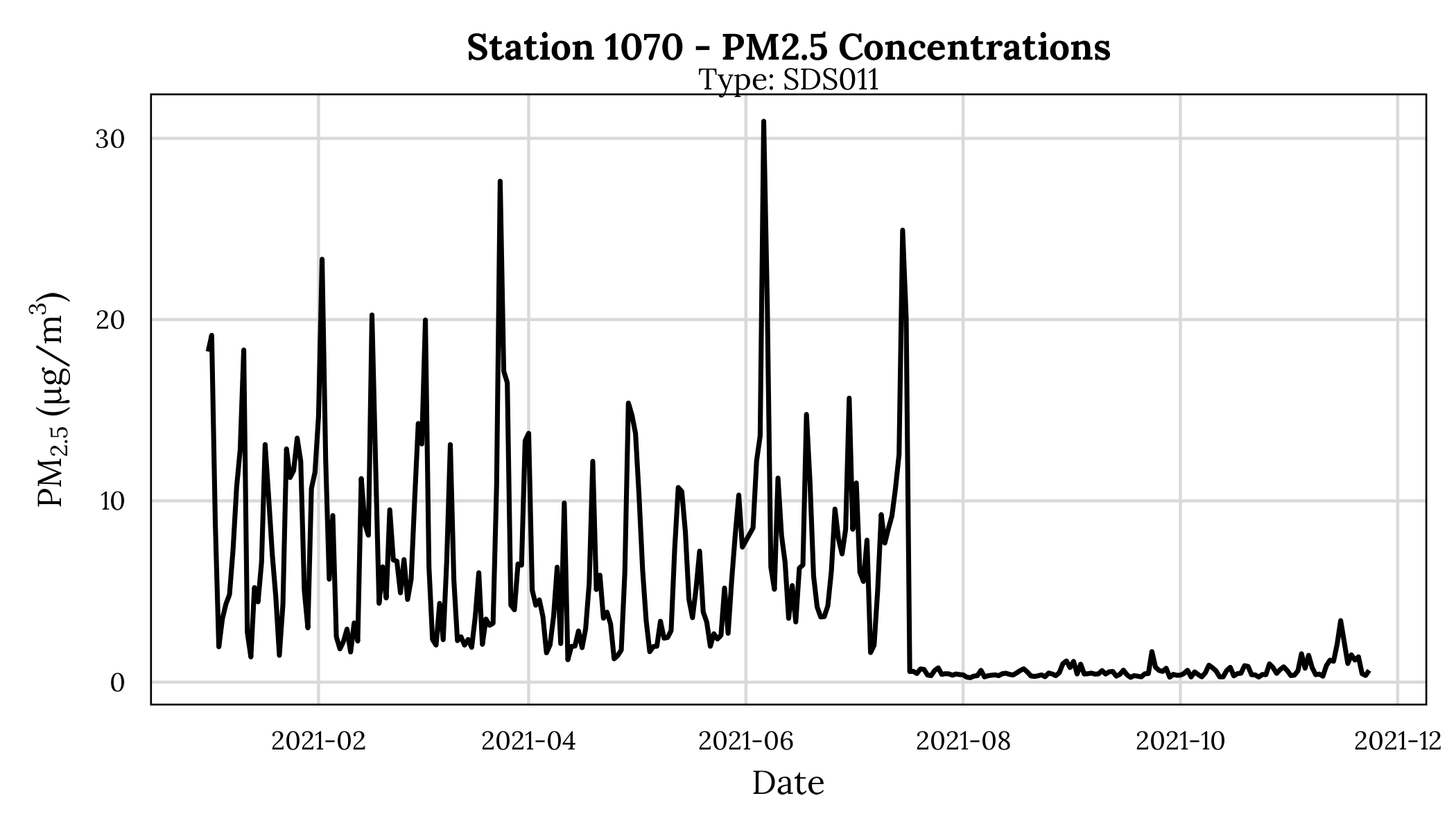}
        \caption{Station 1070}
    \end{subfigure}
    \hfill
    \begin{subfigure}[b]{0.45\textwidth}
        \centering
        \includegraphics[width=\textwidth]{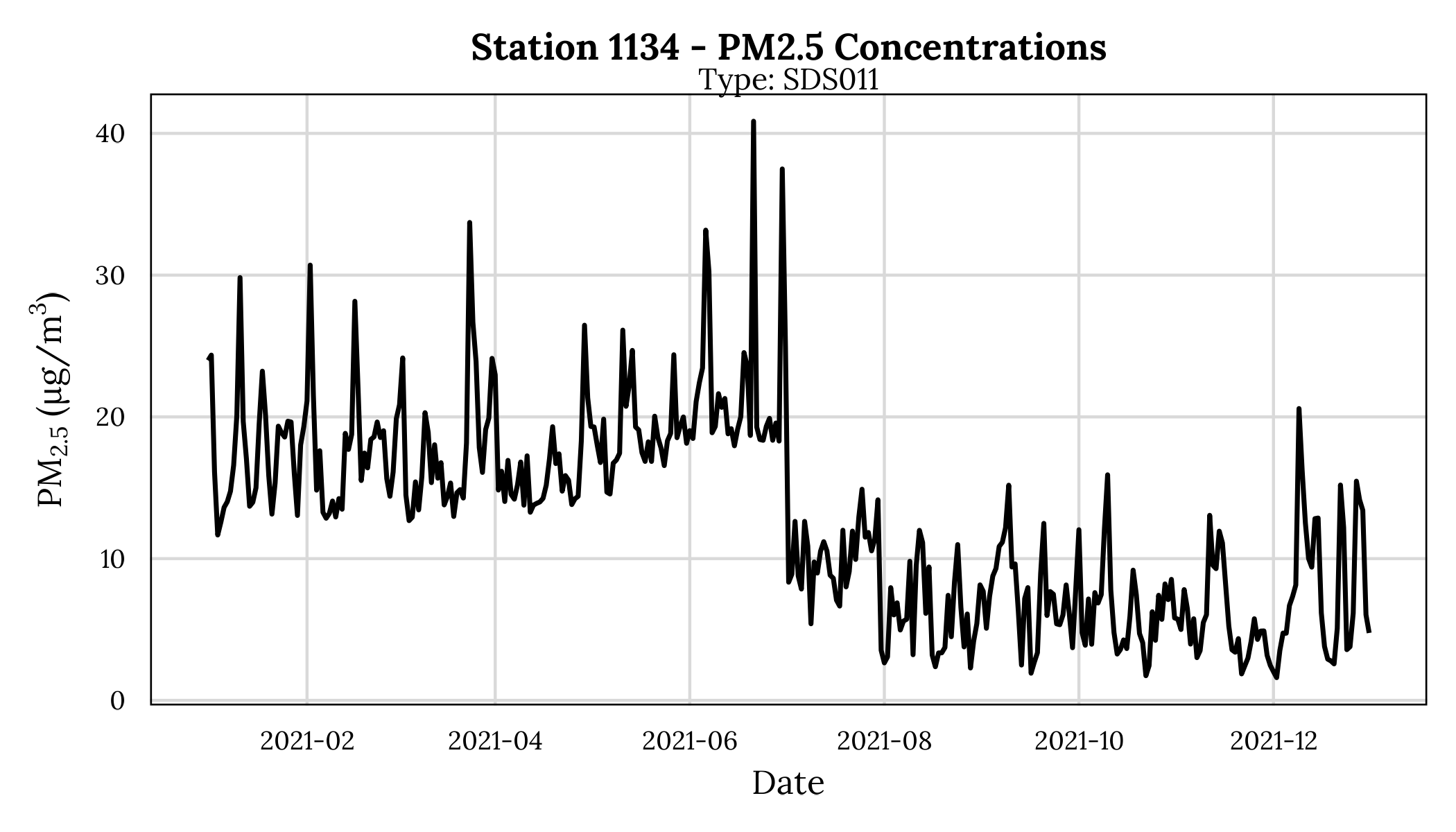}
        \caption{Station 1134}
    \end{subfigure}
    
    \vskip\baselineskip
    
    \begin{subfigure}[b]{0.45\textwidth}
        \centering
        \includegraphics[width=\textwidth]{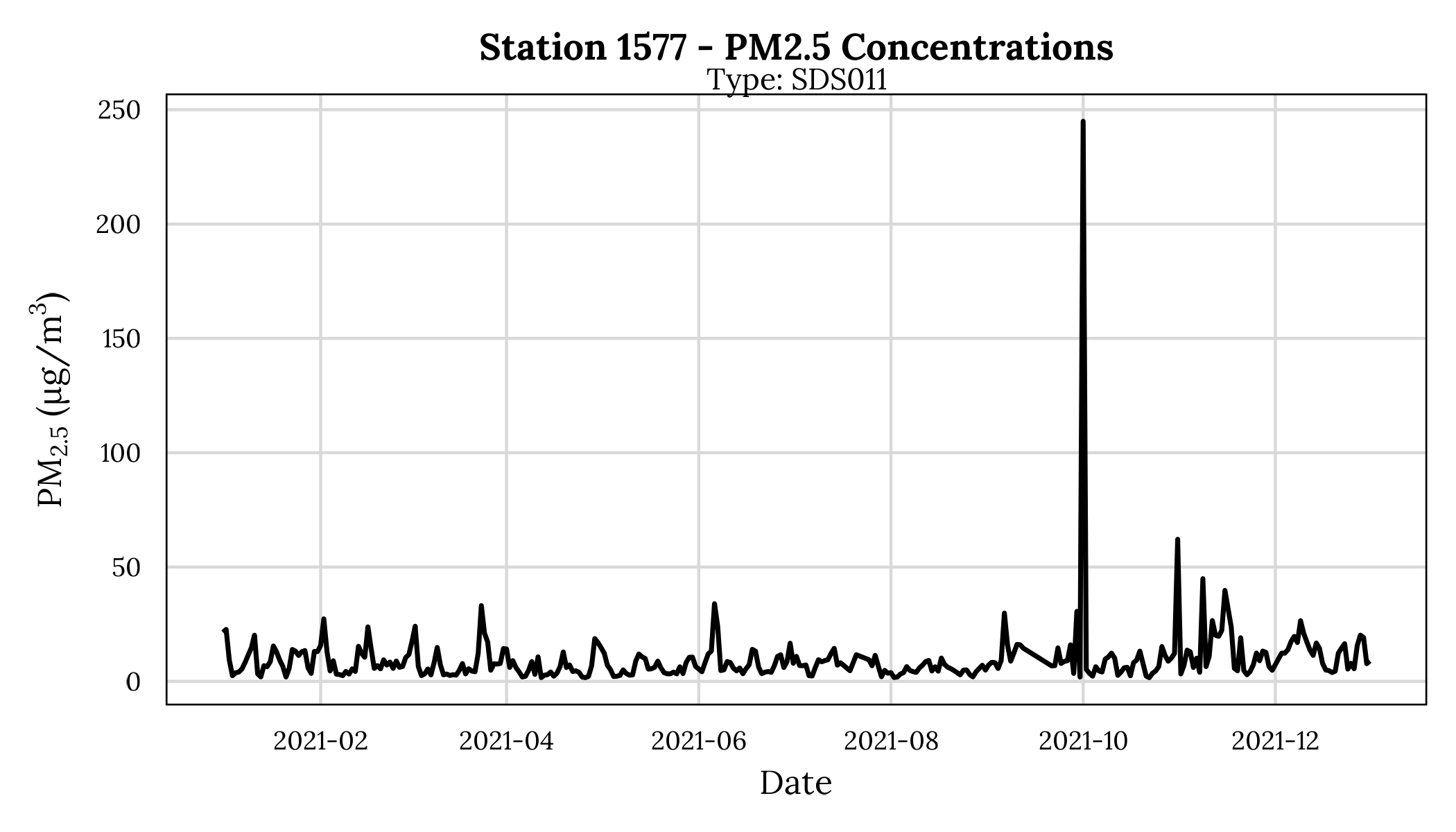}
        \caption{Station 1577}
    \end{subfigure}
    \hfill
    \begin{subfigure}[b]{0.45\textwidth}
        \centering
        \includegraphics[width=\textwidth]{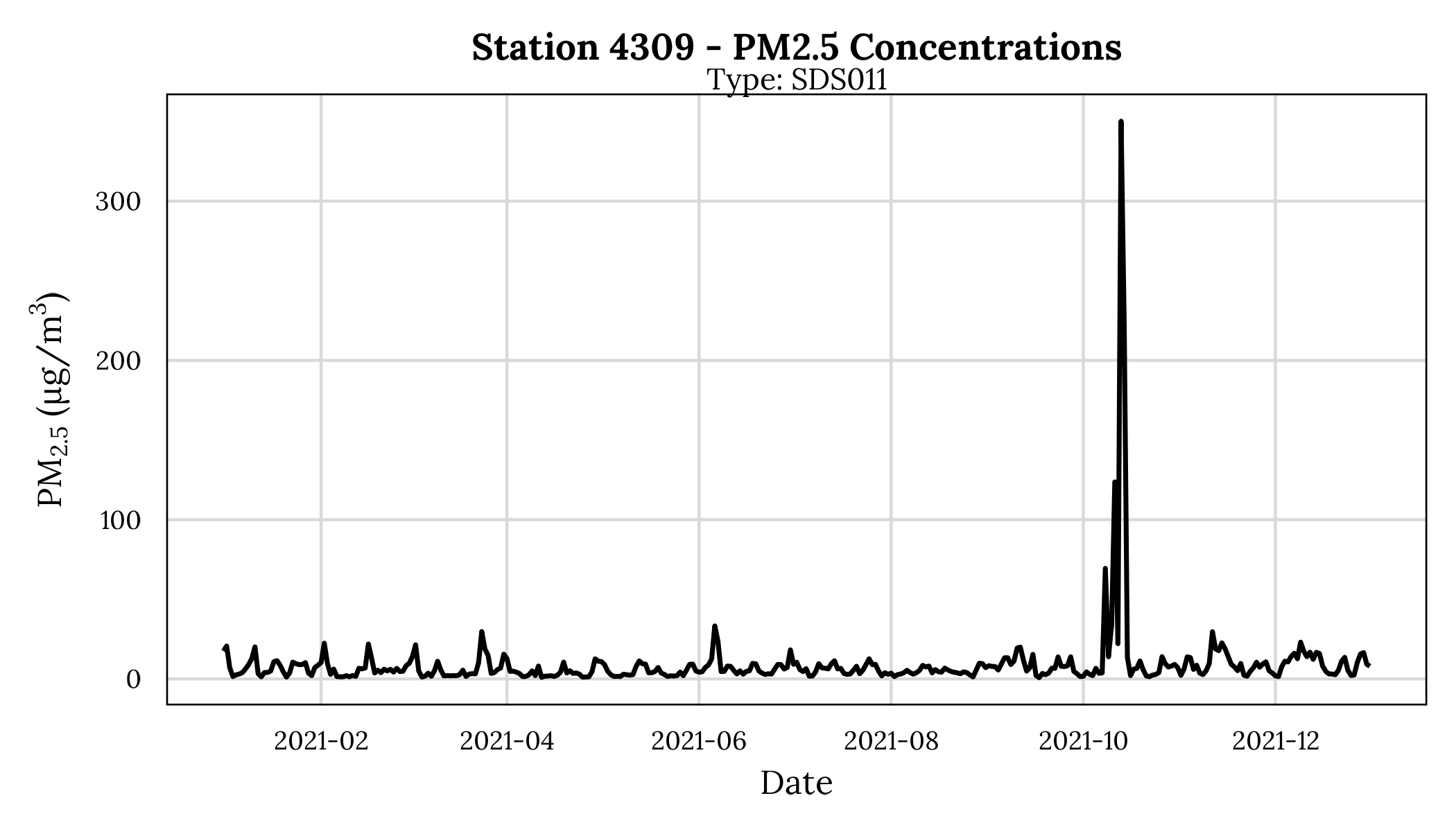}
        \caption{Station 4309}
    \end{subfigure}
    
    \caption{Examples of anomalous low-fidelity PM$_{2.5}$ time series from the openSenseMap citizen-sensor network in Hamburg.}
    \label{fig:lf_anomalies}
\end{figure}

Among the various spatio-temporal frameworks for data fusion, multi-fidelity modelling \citep{babaee2020multifidelity} 
based on a hierarchical ordering of data sources has demonstrated strong performance even without relying on external covariates. The statistical foundations of multi-fidelity modelling were laid by \citet{KennedyOHagan2000}, who introduced the autoregressive co-kriging framework to combine outputs from models of differing accuracy within a hierarchical Bayesian structure for code calibration and prediction. Subsequent contributions, such as the tutorial by \citet{OHagan2006}, consolidated the theoretical basis for Bayesian emulation and uncertainty quantification. Building on these ideas, \citet{Forrester2007} demonstrated the efficiency of multi-fidelity optimisation in engineering design, while \citet{LeGratiet2013} established the recursive co-kriging formulation that improved computational scalability and enabled systematic cross-validation.  More recent advances have extended the framework toward nonlinear and data-driven settings. For instance, \citet{Perdikaris2015} proposed a practical recursive implementation for large-scale engineering applications, and \citet{PerdikarisPNAS2017} generalised the classical linear autoregressive relation into a nonlinear information-fusion paradigm,  representing a major conceptual shift toward flexible and data-efficient multi-fidelity learning. More modern developments \citep{colombo2025warped} introduce warped multi-fidelity models that allow coherent normalisation of both fidelity levels.

To address the above-mentioned challenges, we propose the first Robust Multi-Fidelity Gaussian Process framework (RMFGP) that preserves the flexibility of classical co-kriging while maintaining stability under anomalies in low-fidelity data. Existing robust GP approaches mainly rely on heavy-tailed likelihoods \citep{jylanki2011robust} but are limited to single-output regression. Conversely, multi-fidelity Gaussian processes are highly sensitive to contamination but lack robust counterparts. To our knowledge, no prior model offers bounded-influence multi-fidelity co-kriging. Our proposed approach replaces the Gaussian log-likelihood by a global Huber loss, yielding a bounded-influence M-estimator for all model parameters, including the cross-fidelity correlation. We provide theoretical guarantees for bounded influence under both sparse and block-wise contamination, derive practical expressions for the influence bound, and show how a precision-weighted formulation ensures scalability to high-dimensional spatiotemporal settings.

Statistical air-quality modelling has evolved globally through a wide range of regional applications that combine methodological innovation with policy relevance. In North America, multilevel spatiotemporal frameworks integrating satellite, ground-based, and meteorological information have been extensively applied to ozone and particulate matter modelling, including long-term, high-resolution reconstructions for California \citep{Su2024} and comparative regional analyses for the U.S. Northeast \citep{Yanosky2014}. In East Asia, particularly in the Beijing--Tianjin--Hebei corridor, recent studies have employed multi-source data fusion and geographically weighted regression to capture severe pollution episodes and spatial nonstationarity in PM$_{2.5}$ dynamics \citep{li2023exploring}. Within Europe, the Lombardy region in northern Italy has emerged as a benchmark area, as it is the most polluted area in Europe. Using the harmonised Agrimonia dataset \citep{fasso2023agrimonia}, \citet{otto2024spatiotemporal} systematically compared different statistical and machine-learning models for PM$_{2.5}$ concentrations. Further studies have examined seasonal variability in pollutant levels \citep{colombo2024pm2}, structural changes in air quality \citep{maranzano2024spatiotemporal}, or the health impact of increased PM$_{10}$ concentrations \citep{renna2024impacts}. \citet{rodeschini2024scenario} demonstrated that variance varies significantly across space and time, motivating the use of a heteroscedastic spatiotemporal model for their scenario analyses. Notably, \citet{oxoli2024non} integrated non-conventional data sources for predicting PM$_{2.5}$ concentrations in Lombardy.

Empirically, the method is evaluated in two complementary stages. First, controlled Monte Carlo experiments quantify how outliers affect parameter estimates and the predictive accuracy. Second, the model is applied to {PM$_{2.5}$ concentrations in Hamburg}, Germany, integrating high-quality measurements with dense citizen-sensor observations from \emph{openSenseMap}. Importantly, the analysis relies solely on the intrinsic spatio-temporal structure of the data, without incorporating auxiliary covariates such as meteorological or satellite variables. This demonstrates that robust statistical modelling alone can effectively reconcile heterogeneous data fidelities, thereby ensuring high generalizability of the proposed method. The results show that the proposed estimator consistently outperforms benchmark and competitor models across most evaluation periods, while producing spatially coherent annual pollution maps and reliable uncertainty quantification even in the presence of uncleaned sensor data, as shown in Figure \ref{fig:lf_anomalies}.

The remainder of the paper is structured as follows. Section~\ref{sec:methods} introduces the multi-fidelity Gaussian process formulation and establishes its theoretical properties under contamination, including the asymptotic bias and influence function of the Gaussian estimator. Section~\ref{sec:robust} develops the global Huber-loss extension and derives explicit bounds for the influence of low-fidelity anomalies. Section~\ref{sec:MC} presents a Monte Carlo study that validates the theoretical results under controlled perturbations. Section~\ref{sec:empirical} applies the methodology to real spatiotemporal air-quality data from Hamburg, demonstrating the benefits of robust multi-fidelity fusion for heterogeneous monitoring networks. Finally, Section~\ref{sec:conclusion} concludes the paper.

\section{Methods}\label{sec:methods}

Multi-fidelity Gaussian process (GP) models extend the classical single-fidelity GP regression framework to situations where data of different quality, resolution, or computational cost are available. The key idea is to exploit the information contained in \emph{low-fidelity} (LF) data while correcting for systematic discrepancies with respect to \emph{high-fidelity} (HF) data. This idea was first formalised by \citet{KennedyOHagan2000}, who proposed a hierarchical GP formulation that has since become the canonical framework for multi-fidelity modelling.

\subsection{Hierarchical multi-fidelity model}

Let \( f_L(\bm{x}) \) and \( f_H(\bm{x}) \) denote two stochastic processes defined on the input domain \( \mathcal{X} \subset \mathbb{R}^d \), representing the LF and HF responses, respectively. In spatiotemporal settings, \(\bm{x} = (\bm{s},t)\) typically combines spatial coordinates \(\bm{s} \in \mathbb{R}^2\) and time \(t \in \mathbb{R}\), allowing \(f_L\) and \(f_H\) to describe processes evolving continuously across both space and time. In our context of air quality monitoring, for example, the HF process represents measurements from a network of high-precision official stations using gravimetric reference instruments, while the LF process originates from dense citizen-science networks relying on optical sensors of lower accuracy but higher spatial coverage. Following \citet{KennedyOHagan2000}, the HF process is modelled as a linear transformation of the LF process, including a residual discrepancy process $\delta(\bm{x})$, i.e.,
\begin{equation}
    f_H(\bm{x}) = \rho\, f_L(\bm{x}) + \delta(\bm{x}),
    \label{eq:mf_model}
\end{equation}
where \( \rho \in \mathbb{R} \) is a scaling parameter that adjusts the overall magnitude and correlation structure between the two fidelities.

The LF process \( f_L(\bm{x}) \) is assumed to follow a Gaussian process
\begin{equation}
    f_L(\bm{x}) \sim \mathcal{GP}\!\big(\mu_L(\bm{x}),\, k_L(\bm{x}, \bm{x}'; \theta_L)\big),
\end{equation}
with mean function \( \mu_L(\cdot) \) and covariance function \( k_L(\cdot,\cdot) \) parametrised by \( \theta_L \). The structure of the covariance function can be adapted to the empirical setting and may, for instance, be chosen as stationary and isotropic in purely spatial applications, anisotropic when directional dependencies are expected, or separable and non-separable in the spatiotemporal case, depending on whether spatial and temporal correlations factorise \citep[see, e.g.,][for a general overview]{cressie2011statistics}. For our setting, we will use the squared exponential (or radial basis function, RBF) kernel,
\[
k(\bm{x}, \bm{x}') = \sigma_s^2 \exp\!\left(-\frac{\|\bm{x} - \bm{x}'\|^2}{2 l^2}\right),
\]
where \( \sigma_s^2 \) denotes the signal variance and \( l \) is the length-scale (or range) parameter controlling the rate at which correlations decay with increasing distance between locations. Further, the discrepancy process \( \delta(\bm{x}) \) is modelled as an independent Gaussian process 
\begin{equation}
    \delta(\bm{x}) \sim \mathcal{GP}\!\big(\mu_\delta(\bm{x}),\, k_\delta(\bm{x}, \bm{x}'; \theta_\delta)\big),
\end{equation}
representing the systematic difference between the two fidelities.

Combining the two levels, we obtain the joint process
\[
    \mathbf{f}(\bm{x}x) = 
    \begin{bmatrix}
        f_L(\bm{x}) \\[2pt]
        f_H(\bm{x})
    \end{bmatrix}
\]
as a multivariate Gaussian,
\begin{equation}
    \mathbf{f}(\bm{x}) \sim 
    \mathcal{GP}\!\left(
        \boldsymbol{\mu}(\bm{x}),
        \boldsymbol{K}(\bm{x},\bm{x}')
    \right),
\end{equation}
with the mean vector 
\[
    \boldsymbol{\mu}(\bm{x}) =
    \begin{bmatrix}
        \mu_L(\bm{x}) \\[2pt]
        \rho\,\mu_L(\bm{x}) + \mu_\delta(\bm{x})
    \end{bmatrix},
\]
and block covariance function
\begin{equation}
    \boldsymbol{K}(\bm{x},\bm{x}') =
    \begin{bmatrix}
        k_L(\bm{x},\bm{x}') & \rho\,k_L(\bm{x},\bm{x}') \\[2pt]
        \rho\,k_L(\bm{x},\bm{x}') & \rho^2\,k_L(\bm{x},\bm{x}') + k_\delta(\bm{x},\bm{x}')
    \end{bmatrix}.
    \label{eq:joint_kernel}
\end{equation}
Thus, the cross-covariance between the fidelities is determined by the scaling parameter \( \rho \), which encodes the degree of linear dependence between LF and HF processes. When \( |\rho| \) is close to one, the two fidelities are strongly correlated; smaller values of \( |\rho| \) indicate weaker dependence or structural mismatch. Notice that \( |\rho| \) is unbounded, hence it can also act as scale-multiplier of the overall signal LF. In practice, all parameters of the model, including the scaling factor \( \rho \), the covariance parameters \( \theta_L \) and \( \theta_\delta \), and the noise variances, must be estimated from the data, e.g., by maximising the Gaussian likelihood. However, the resulting estimates can be sensitive to irregularities in the observed LF data. Outliers, abrupt level shifts, or structural changes in the LF process may distort the joint likelihood and bias the estimation of \( \rho \) and the covariance components. Such irregularities are not uncommon in empirical applications, particularly when LF data originate from heterogeneous or low-cost sensor networks, motivating the development of robust estimation procedures for multi-fidelity models.

\subsection{Observation model and Gaussian likelihood estimation}

In practice, both processes are observed at finite sets of spatial locations over time. Therefore, let \( \{\bm{x}_i^L\}_{i=1}^{n_L} \) and \( \{\bm{x}_j^H\}_{j=1}^{n_H} \) denote the locations of the LF and HF data, respectively, and define
\[
    \mathbf{y}_L = f_L(\bm{x}_L) + \boldsymbol{\varepsilon}_L, 
    \qquad 
    \mathbf{y}_H = f_H(\bm{x}_H) + \boldsymbol{\varepsilon}_H,
\]
where \( \boldsymbol{\varepsilon}_L \sim \mathcal{N}(\mathbf{0}, \tau_L^2 I) \) and \( \boldsymbol{\varepsilon}_H \sim \mathcal{N}(\mathbf{0}, \tau_H^2 I) \) denote independent and identically distributed measurement errors, normally assumed to be Gaussian. Stacking both fidelities gives
\begin{equation}
    \mathbf{y} = 
    \begin{bmatrix}
        \mathbf{y}_L \\[2pt]
        \mathbf{y}_H
    \end{bmatrix}
    \sim
    \mathcal{N}\!\left(
    \begin{bmatrix}
        \boldsymbol{\mu}_L \\
        \rho\,\boldsymbol{\mu}_L + \boldsymbol{\mu}_\delta
    \end{bmatrix},
    \begin{bmatrix}
        K_{LL} + \tau_L^2 I & \rho K_{LL} \\
        \rho K_{LL} & \rho^2 K_{LL} + K_\delta + \tau_H^2 I
    \end{bmatrix}
    \right),
    \label{eq:joint_obs}
\end{equation}
where \(K_{LL}\) and \(K_\delta\) are the covariance matrices obtained from \(k_L\) and \(k_\delta\) at the corresponding sets of input locations. Under this joint Gaussian model, the parameters
\[
    \Theta = \{\rho, \theta_L, \theta_\delta, \tau_L^2, \tau_H^2\}
\]
are estimated by maximising the Gaussian log-likelihood
\begin{equation}
    \ell(\Theta) = -\tfrac{1}{2}\log |\Sigma(\Theta)|
    - \tfrac{1}{2}(\mathbf{y} - \boldsymbol{\mu}(\Theta))^\top 
    \Sigma(\Theta)^{-1}(\mathbf{y} - \boldsymbol{\mu}(\Theta)),
    \label{eq:loglik}
\end{equation}
where \( \Sigma(\Theta) \) is the block covariance matrix in \eqref{eq:joint_obs}.

The model formulation in \eqref{eq:mf_model}--\eqref{eq:joint_obs} provides the basis for subsequent theoretical developments. In particular, the parameter \( \rho \) is of primary interest, as it quantifies the linear coupling between fidelities. Below, we analyse the statistical behaviour of the classical maximum likelihood estimator (MLE) under the standard Gaussian assumptions and under deviations caused by contaminated or misspecified low-fidelity data.

In empirical multi-fidelity applications, the low-fidelity (LF) data source is often subject to irregularities that violate the idealised Gaussian model assumptions. Such distortions can arise from measurement errors (e.g., the PM sensor was positioned next to a BBQ area), local environmental effects (e.g., the sensor is placed in a corner with varying ventilation), or gradual sensor degradation (e.g., general wear). We will consider two principal types of contamination that are particularly relevant in spatiotemporal monitoring systems:

\begin{enumerate}
    \item \textbf{Outliers.} 
    Individual observations or short subsequences of the LF process may deviate markedly from the underlying trend due to transient sensor faults or unmodelled external influences. These are typically sparse in time or space and can be represented as additive contamination,
    \[
        y_L(\bm{x}_i) = f_L(\bm{x}_i) + u_i, \qquad 
        u_i \sim (1 - \eta)\,\mathcal{N}(0, \tau_L^2) + \eta\,\mathcal{Q},
    \]
    where a small fraction \( \eta \) of observations follow a heavy-tailed or shifted distribution \( \mathcal{Q} \). 

    \item \textbf{Change points or level shifts.}
    Longer-term deviations may occur when the LF process exhibits abrupt changes in its mean level or variance, for instance, after a change of the sensor position. 
    This can be expressed as
    \[
        f_L(\bm{x},t) =
        \begin{cases}
            f_L(\bm{x},t), & t \leq \tau,\\[3pt]
            f_L(\bm{x},t) + \Delta, & t > \tau,
        \end{cases}
    \]
    where \( \tau \) denotes the change-point location and \( \Delta \) represents the mean shift. 
    Such structural breaks lead to nonstationarity in the LF process and can bias both the scaling parameter \( \rho \) and the covariance hyperparameters when fitted under the stationary Gaussian model.
\end{enumerate}

In the following paragraphs, we formalise the impact of these deviations on the Gaussian MLE for the scaling parameter \( \rho \) and demonstrate analytically how outliers and level shifts lead to biased or unstable estimates. Let the high-, and low-fidelity residuals be denoted by
\[
\mathbf{r}_L=\mathbf{y}_L-\boldsymbol{\mu}_L,\qquad 
\mathbf{r}_H=\mathbf{y}_H-\boldsymbol{\mu}_H, \quad \text{and} \quad 
B = K_{LL}\Sigma_{LL}^{-1},\quad \Omega = K_\delta+\tau_H^2 I,
\]
so that, under the model in \eqref{eq:joint_obs}, the conditional relation is
\begin{equation}
\mathbf{r}_H=\rho\,B\,\mathbf{r}_L+\boldsymbol{\eta},\qquad \boldsymbol{\eta}\sim\mathcal{N}(\mathbf{0},\Omega).
\label{eq:cond_relation}
\end{equation}

\begin{proposition}[Standard closed form Gaussian MLE for $\rho$]
\label{prop:gls}
For fixed $(\theta_L,\theta_\delta,\tau_L^2,\tau_H^2)$, the Gaussian MLE of $\rho$ equals the generalised least squares (GLS) estimator
\begin{equation}
\hat\rho
=\frac{\mathbf{r}_L^\top B^\top \Omega^{-1}\mathbf{r}_H}{\mathbf{r}_L^\top B^\top \Omega^{-1} B\,\mathbf{r}_L}.
\label{eq:rho_hat}
\end{equation}
\end{proposition}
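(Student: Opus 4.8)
The plan is to treat the problem as a conditional Gaussian regression and recognise that, for fixed covariance hyperparameters, the log-likelihood in \eqref{eq:loglik} is quadratic in $\rho$, so maximisation reduces to a weighted least squares problem. First I would use the conditional relation \eqref{eq:cond_relation}, namely $\mathbf{r}_H = \rho\,B\,\mathbf{r}_L + \boldsymbol{\eta}$ with $\boldsymbol{\eta}\sim\mathcal{N}(\mathbf{0},\Omega)$, which already encodes the Schur-complement decomposition of the joint covariance in \eqref{eq:joint_obs}. The key observation is that conditioning $\mathbf{y}_H$ on $\mathbf{y}_L$ factorises the joint density $p(\mathbf{y}_L,\mathbf{y}_H)=p(\mathbf{y}_L)\,p(\mathbf{y}_H\mid\mathbf{y}_L)$, and only the conditional factor $p(\mathbf{y}_H\mid\mathbf{y}_L)$ depends on $\rho$ once $(\theta_L,\theta_\delta,\tau_L^2,\tau_H^2)$ are held fixed. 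Hence maximising $\ell(\Theta)$ over $\rho$ is equivalent to maximising this conditional Gaussian factor.

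The main step is then to write the conditional log-likelihood explicitly. Since $\mathbf{r}_H - \rho\,B\,\mathbf{r}_L = \boldsymbol{\eta}\sim\mathcal{N}(\mathbf{0},\Omega)$, the relevant term is
\begin{equation}
\ell_{H\mid L}(\rho) = \text{const} - \tfrac{1}{2}\bigl(\mathbf{r}_H - \rho\,B\,\mathbf{r}_L\bigr)^\top \Omega^{-1}\bigl(\mathbf{r}_H - \rho\,B\,\mathbf{r}_L\bigr),
\label{eq:cond_loglik}
\end{equation}
where the normalising determinant $\log|\Omega|$ does not involve $\rho$. This is manifestly a quadratic in the scalar $\rho$, with leading coefficient $-\tfrac{1}{2}\mathbf{r}_L^\top B^\top \Omega^{-1} B\,\mathbf{r}_L \le 0$, so it is concave and its maximiser is found by setting the derivative to zero. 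Differentiating and solving the resulting linear first-order condition yields exactly the closed form in \eqref{eq:rho_hat}; the positivity of the denominator (whenever $B\,\mathbf{r}_L \neq \mathbf{0}$ and $\Omega$ is positive definite) guarantees that this stationary point is the unique global maximum.

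The one place requiring care, and the step I expect to be the main obstacle, is justifying that maximising the full joint likelihood \eqref{eq:loglik} over $\rho$ coincides with maximising only the conditional factor \eqref{eq:cond_loglik}. This hinges on verifying that the marginal factor $p(\mathbf{y}_L)$ and the determinant term $\log|\Sigma(\Theta)|$ are functionally independent of $\rho$ once the other hyperparameters are fixed. For $p(\mathbf{y}_L)$ this is immediate, since its covariance $\Sigma_{LL}=K_{LL}+\tau_L^2 I$ and mean $\boldsymbol{\mu}_L$ do not contain $\rho$. For the determinant, I would invoke the block factorisation $|\Sigma| = |\Sigma_{LL}|\,|\Omega|$, which follows because the Schur complement of $\Sigma_{LL}$ in the joint covariance of \eqref{eq:joint_obs} equals precisely $\rho^2 K_{LL} + K_\delta + \tau_H^2 I - \rho K_{LL}\Sigma_{LL}^{-1}\rho K_{LL} = K_\delta + \tau_H^2 I = \Omega$, with the $\rho$-dependent terms cancelling exactly. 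Establishing this cancellation cleanly is the crux; once it is in place, $\log|\Sigma(\Theta)|$ is constant in $\rho$, the optimisation collapses to the quadratic problem, and the GLS form \eqref{eq:rho_hat} follows by elementary differentiation.
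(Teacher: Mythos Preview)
Your core argument is correct and coincides with the paper's own one-line justification: once the conditional relation~\eqref{eq:cond_relation} is in place, the conditional log-likelihood is quadratic in $\rho$ with $\rho$-free normaliser, and differentiating gives the GLS form~\eqref{eq:rho_hat}. The paper simply invokes this quadratic structure and the GLS normal equations without spelling out the determinant check you attempt.

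There is, however, a concrete algebraic slip in your extra verification step. The Schur complement of $\Sigma_{LL}=K_{LL}+\tau_L^2 I$ in the block covariance of~\eqref{eq:joint_obs} is
\[
\rho^2 K_{LL}+K_\delta+\tau_H^2 I-\rho^2 K_{LL}\Sigma_{LL}^{-1}K_{LL}
=\Omega+\rho^2\,K_{LL}\Sigma_{LL}^{-1}\tau_L^2 I
=\Omega+\rho^2\tau_L^2\,B,
\]
so the $\rho$-dependent terms do \emph{not} cancel unless $\tau_L^2=0$. Your claimed identity $|\Sigma|=|\Sigma_{LL}|\,|\Omega|$ therefore fails in the noisy-LF case. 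This does not break the proof of the proposition as stated, because the paper has already stipulated~\eqref{eq:cond_relation} with conditional covariance exactly $\Omega$; taking that relation as the premise (rather than re-deriving it from~\eqref{eq:joint_obs}) makes $\log|\Omega|$ manifestly $\rho$-free, and your quadratic maximisation goes through unchanged. The cleanest fix is simply to drop the Schur-complement paragraph and cite~\eqref{eq:cond_relation} directly, as the paper does.
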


This follows from the fact that, conditional on $\mathbf{y}_L$, the Gaussian log-likelihood in $\rho$ reduces to a quadratic form in $\mathbf{r}_H-\rho B\mathbf{r}_L$, so that maximising it is equivalent to solving the GLS normal equations \citep[see, e.g.,][]{rasmussen2006gaussian}. Now, assume the LF observations are contaminated:
\begin{equation}
\mathbf{y}_L=\mathbf{f}_L+\mathbf{u},\qquad \mathbb{E}[\mathbf{u}]=\mathbf{0},\quad \mathrm{Cov}(\mathbf{u})=\Sigma_u, \quad \mathbf{u}\ \perp\ (\mathbf{f}_L,\boldsymbol{\eta}), 
\label{eq:contam_setup}
\end{equation}
where the shift vector $\mathbf{u}$ arises from the two different shift scenarios discussed above. In the following theorem, we show that the estimated scale parameter $\rho$ is biased. 

\begin{theorem}[Pseudo-true parameter under LF contamination]
\label{thm:attenuation}
Assume that the contamination is independent of the latent LF process $\mathbf{f}_L$ and the HF innovations $\boldsymbol{\eta}$, i.e, $\mathbf{u}\perp(\mathbf{f}_L,\boldsymbol{\eta})$, and let $\rho^\star$ denote the Kullback--Leibler projection of the Gaussian MLE under \eqref{eq:contam_setup}. Then,
\begin{equation}
\rho^\star
=\rho\,
\frac{\mathrm{tr}\!\big(B^\top \Omega^{-1} C_L B\big)}
{\mathrm{tr}\!\big(B^\top \Omega^{-1} (C_L+\Sigma_u) B\big)}
\ =:\ \kappa\,\rho,\qquad \kappa\in(0,1],
\label{eq:rho_star}
\end{equation}
where $C_L=\mathrm{Cov}(\mathbf{f}_L)$. That is, any nonzero contamination variance $\Sigma_u$ leads to a biased estimate and attenuates $\rho^\star$ toward $0$.
\end{theorem}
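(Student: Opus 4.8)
The plan is to identify $\rho^\star$ with the Kullback--Leibler projection in the sense of White's quasi-MLE theory: it is the value maximising the expected Gaussian log-likelihood under the true contaminated law. First I would observe that, holding the nuisance parameters $(\theta_L,\theta_\delta,\tau_L^2,\tau_H^2)$ fixed, only the conditional HF factor of \eqref{eq:loglik} depends on $\rho$, and by Proposition~\ref{prop:gls} this factor is (up to constants) the quadratic form $-\tfrac12(\mathbf{r}_H-\rho B\mathbf{r}_L)^\top\Omega^{-1}(\mathbf{r}_H-\rho B\mathbf{r}_L)$. Taking its expectation under the contaminated DGP gives a scalar quadratic in $\rho$ that is strictly concave, since the coefficient of $\rho^2$ is $-\tfrac12\,\mathbb{E}[\mathbf{r}_L^\top B^\top\Omega^{-1}B\mathbf{r}_L]<0$. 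Differentiating and setting the derivative to zero yields the population normal equation, so that
\[
\rho^\star=\frac{\mathbb{E}[\mathbf{r}_L^\top B^\top\Omega^{-1}\mathbf{r}_H]}{\mathbb{E}[\mathbf{r}_L^\top B^\top\Omega^{-1}B\mathbf{r}_L]},
\]
and the entire argument reduces to evaluating these two moments.

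The decisive modelling step---and the one I expect to be the main obstacle---is to recognise the setup as an errors-in-variables problem: the high-fidelity response is generated from the \emph{uncontaminated} low-fidelity signal, while the contamination $\mathbf{u}$ enters only the regressor. Concretely, I would write $\mathbf{r}_L=\mathbf{r}_L^0+\mathbf{u}$, where $\mathbf{r}_L^0$ is the clean LF residual with $\mathrm{Cov}(\mathbf{r}_L^0)=C_L$, and retain the true conditional relation $\mathbf{r}_H=\rho B\mathbf{r}_L^0+\boldsymbol{\eta}$ from \eqref{eq:cond_relation}. The hypothesis $\mathbf{u}\perp(\mathbf{f}_L,\boldsymbol{\eta})$ with $\mathbb{E}[\mathbf{u}]=\mathbf{0}$ is exactly what leaves the cross-moment in the numerator untouched by the contamination while inflating the regressor second moment in the denominator. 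Getting this data-generating convention right---contamination in the regressor, not in the response---is what separates genuine attenuation from a harmless rescaling, and it is the point where the proof can most easily go astray.

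Substituting the decomposition and expanding, I would use $\mathbb{E}[\mathbf{u}]=\mathbf{0}$ and the mutual independence to annihilate every cross term: $\mathbb{E}[\mathbf{u}^\top B^\top\Omega^{-1}\boldsymbol{\eta}]$, $\mathbb{E}[\mathbf{u}^\top B^\top\Omega^{-1}B\mathbf{r}_L^0]$ and $\mathbb{E}[(\mathbf{r}_L^0)^\top B^\top\Omega^{-1}\boldsymbol{\eta}]$ all vanish. What survives in the numerator is $\rho\,\mathbb{E}[(\mathbf{r}_L^0)^\top B^\top\Omega^{-1}B\mathbf{r}_L^0]$, and in the denominator $\mathbb{E}[(\mathbf{r}_L^0+\mathbf{u})^\top B^\top\Omega^{-1}B(\mathbf{r}_L^0+\mathbf{u})]$. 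Applying the quadratic-form identity $\mathbb{E}[\mathbf{v}^\top A\mathbf{v}]=\mathrm{tr}(A\,\mathrm{Cov}(\mathbf{v}))$ with $A=B^\top\Omega^{-1}B$ converts both to traces, and using the cyclic property of the trace together with the symmetry $B=B^\top$ (which holds because $K_{LL}$ and $\Sigma_{LL}^{-1}$ commute) delivers the numerator $\rho\,\mathrm{tr}(B^\top\Omega^{-1}C_L B)$ and denominator $\mathrm{tr}(B^\top\Omega^{-1}(C_L+\Sigma_u)B)$, which is precisely \eqref{eq:rho_star}.

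Finally, to establish $\kappa\in(0,1]$ I would write $\kappa=a/(a+b)$ with $a=\mathrm{tr}(B^\top\Omega^{-1}C_L B)$ and $b=\mathrm{tr}(B^\top\Omega^{-1}\Sigma_u B)$. Since $B^\top\Omega^{-1}B$ and the covariances $C_L,\Sigma_u$ are positive semidefinite, the trace of each product is nonnegative by the standard fact $\mathrm{tr}(PQ)=\mathrm{tr}(P^{1/2}QP^{1/2})\ge 0$ for $P,Q\succeq 0$; positive definiteness of $C_L$ (with $B$ of full rank) gives $a>0$, hence $\kappa>0$, while $b\ge 0$ gives $\kappa\le 1$, with equality precisely when $\Sigma_u=\mathbf{0}$. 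This establishes the claimed attenuation of $\rho^\star$ toward zero under any nonzero contamination variance.
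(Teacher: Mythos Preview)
Your proposal is correct and follows essentially the same route as the paper: set up the population weighted least-squares criterion $\mathbb{E}\|\mathbf{r}_H-\rho B\mathbf{r}_L\|_{\Omega^{-1}}^2$, differentiate to obtain the population normal equation, substitute the contaminated DGP $\mathbf{r}_L=\mathbf{f}_L+\mathbf{u}$ with $\mathbf{r}_H=\rho B\mathbf{f}_L+\boldsymbol{\eta}$, kill the cross terms via independence, and convert the surviving quadratic-form expectations to traces. Your errors-in-variables framing, the explicit quadratic-form identity, and the $\mathrm{tr}(PQ)\ge 0$ argument for $\kappa\in(0,1]$ are somewhat more detailed than the paper's treatment, but the structure and key steps are identical.
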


The proof of all theoretical results obtained throughout the paper can be found in the Appendix. Even though the asymptotic bias of $\rho^\star$ is bounded, as $|\rho-\rho^\star|= (1-\kappa)|\rho|\le |\rho|$, the influence of extreme LF contamination on the finite-sample MLE is unbounded. To show that, let the score function for $\rho$ from \eqref{eq:cond_relation}:
\begin{equation}
s_\rho(\mathbf{y})=\frac{\partial \ell(\rho)}{\partial \rho}
=\big(\mathbf{r}_H-\rho B\mathbf{r}_L\big)^\top \Omega^{-1} B\,\mathbf{r}_L, \quad \text{and}
\label{eq:score}
\end{equation}
let $A=-\mathbb{E}\big[ \partial s_\rho/\partial\rho \big]=\mathbb{E}\big[\mathbf{r}_L^\top B^\top\Omega^{-1}B\,\mathbf{r}_L\big]>0$. The influence function at the uncontaminated model is
\begin{equation}
\mathrm{IF}(\mathbf{y};\rho)= -\,A^{-1}\,s_\rho(\mathbf{y}).
\label{eq:IF_def}
\end{equation}
In the following two propositions, we will demonstrate that the two change types will result in an unbounded influence function.

\begin{proposition}[Point outliers imply unbounded influence]
\label{prop:IF_outlier}
Fix $\mathbf{r}_H$ and $\rho\neq0$. For any sequence of low-fidelity contaminations with $\|\mathbf{u}\|\to\infty$,
\[
|s_\rho(\mathbf{y})|\to\infty
\quad\text{and hence}\quad
|\mathrm{IF}(\mathbf{y};\rho)|\to\infty.
\]
Thus, the Gaussian MLE has an unbounded influence with respect to LF outliers.
\end{proposition}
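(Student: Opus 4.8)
The plan is to exhibit the score in \eqref{eq:score} as a quadratic-plus-linear form in the contaminated low-fidelity residual and then to argue that the quadratic part both dominates and diverges. First I would substitute the contaminated residual $\mathbf{r}_L=\mathbf{r}_L^{(0)}+\mathbf{u}$, where $\mathbf{r}_L^{(0)}=\mathbf{f}_L-\boldsymbol{\mu}_L$ is the fixed uncontaminated residual, and introduce $\mathbf{v}=B\mathbf{r}_L$. Expanding \eqref{eq:score} gives
\[
s_\rho(\mathbf{y})=\mathbf{r}_H^\top\Omega^{-1}\mathbf{v}-\rho\,\mathbf{v}^\top\Omega^{-1}\mathbf{v}=:\ell-\rho\,q,
\]
with $q=\mathbf{v}^\top\Omega^{-1}\mathbf{v}\ge0$ and $\ell=\mathbf{r}_H^\top\Omega^{-1}\mathbf{v}$. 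Because $\mathbf{r}_H$ and $\rho$ are held fixed by hypothesis, the only source of growth is $\mathbf{v}$, so the whole argument reduces to controlling $q$ and $\ell$ as $\|\mathbf{u}\|\to\infty$.

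The key quantitative step is to show $q\to\infty$. Since $\Omega=K_\delta+\tau_H^2 I\succeq\tau_H^2 I\succ0$, its inverse obeys $\Omega^{-1}\succeq\lambda_{\max}(\Omega)^{-1}I$, whence $q\ge\|\mathbf{v}\|^2/\lambda_{\max}(\Omega)$; it therefore suffices that $\|\mathbf{v}\|=\|B\mathbf{r}_L\|\to\infty$. This is where the main obstacle lies: a priori $\mathbf{u}$ could align with a near-null direction of $B$, in which case $B\mathbf{u}$ would not grow and the conclusion would fail for that sequence. I would rule this out by noting that $B=K_{LL}\Sigma_{LL}^{-1}$ with $\Sigma_{LL}=K_{LL}+\tau_L^2 I$ has eigenvalues $\lambda_i/(\lambda_i+\tau_L^2)$, where the $\lambda_i>0$ are the eigenvalues of the matrix $K_{LL}$, which is strictly positive definite for the RBF kernel at distinct locations. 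Hence $\sigma_{\min}(B)>0$, and by the reverse triangle inequality $\|\mathbf{v}\|\ge\sigma_{\min}(B)\|\mathbf{u}\|-\|B\|\,\|\mathbf{r}_L^{(0)}\|\to\infty$, so $q\to\infty$.

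Next I would bound the linear term by Cauchy--Schwarz in the $\Omega^{-1}$ inner product, $|\ell|\le(\mathbf{r}_H^\top\Omega^{-1}\mathbf{r}_H)^{1/2}q^{1/2}=c\,q^{1/2}$ with $c$ fixed, and combine this with the reverse triangle inequality to obtain
\[
|s_\rho(\mathbf{y})|\ge|\rho|\,q-|\ell|\ge q^{1/2}\big(|\rho|\,q^{1/2}-c\big),
\]
which diverges because $\rho\neq0$ and $q\to\infty$. Finally, since the influence function in \eqref{eq:IF_def} is $\mathrm{IF}(\mathbf{y};\rho)=-A^{-1}s_\rho(\mathbf{y})$ with $A=\mathbb{E}\big[\mathbf{r}_L^\top B^\top\Omega^{-1}B\,\mathbf{r}_L\big]>0$ a fixed finite constant evaluated at the uncontaminated model and therefore independent of $\mathbf{u}$, I conclude $|\mathrm{IF}(\mathbf{y};\rho)|=A^{-1}|s_\rho(\mathbf{y})|\to\infty$. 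The only genuine subtlety—and the step I would flag as the main obstacle—is the non-degeneracy of $B$ needed to guarantee divergence for \emph{every} contamination sequence rather than merely some; the remainder is a routine dominance argument between a quadratic and a linear form.
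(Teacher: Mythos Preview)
Your proof is correct and follows essentially the same route as the paper's: expand the score as a quadratic-minus-linear form in $\mathbf{r}_L$ (equivalently in $\mathbf{v}=B\mathbf{r}_L$), invoke positive definiteness of $B^\top\Omega^{-1}B$ to make the quadratic term dominate, and divide by the fixed constant $A$. Your version is a little more explicit---computing the eigenvalues of $B$ and using Cauchy--Schwarz in the $\Omega^{-1}$ inner product---but the structure and the identified obstacle (non-degeneracy of $B$) coincide with the paper's argument and its accompanying footnote.
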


\begin{proposition}[Level shifts imply unbounded influence]
\label{prop:IF_shift}
Let $\mathbf{u}=\mathbf{b}$ represent a deterministic mean shift in the LF process, 
with $b_i=\Delta$ on a nonempty index set $\mathcal{I}$ and $b_i=0$ otherwise. 
Fix $\mathbf{r}_H$ and $\rho\neq 0$. 
As $|\Delta|\to\infty$,
\[
|s_\rho(\mathbf{y})|\to\infty
\quad\text{and hence}\quad
|\mathrm{IF}(\mathbf{y};\rho)|\to\infty.
\]
Therefore, the Gaussian MLE exhibits unbounded influence with respect to the magnitude of an unmodelled mean shift in the LF process.
\end{proposition}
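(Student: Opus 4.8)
The plan is to substitute the deterministic level shift directly into the closed-form score \eqref{eq:score} and exhibit $s_\rho(\mathbf{y})$ as an explicit quadratic polynomial in the shift magnitude $\Delta$ whose leading coefficient is nonzero. Write $\mathbf{e}=\mathbf{1}_{\mathcal{I}}$ for the fixed, nonzero indicator vector of the shifted index set $\mathcal{I}$, so that the contaminated low-fidelity residual decomposes as $\mathbf{r}_L=\mathbf{r}_L^0+\Delta\,\mathbf{e}$, where $\mathbf{r}_L^0=\mathbf{f}_L-\boldsymbol{\mu}_L$ collects the uncontaminated part and is held fixed together with $\mathbf{r}_H$, as required by the hypotheses.

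First I would expand $s_\rho(\mathbf{y})=\mathbf{r}_H^\top\Omega^{-1}B\mathbf{r}_L-\rho\,\mathbf{r}_L^\top B^\top\Omega^{-1}B\mathbf{r}_L$ in powers of $\Delta$. The cross term $\mathbf{r}_H^\top\Omega^{-1}B\mathbf{r}_L$ is affine in $\Delta$, while the quadratic form $\rho\,\mathbf{r}_L^\top B^\top\Omega^{-1}B\mathbf{r}_L$ contributes a genuine $\Delta^2$ term, giving $s_\rho(\mathbf{y})=c_0+c_1\Delta+c_2\Delta^2$ with leading coefficient
\[
c_2=-\rho\,\mathbf{e}^\top B^\top\Omega^{-1}B\,\mathbf{e}.
\]

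The crux, and the only step requiring care, is to show $c_2\neq 0$. Since $\Omega=K_\delta+\tau_H^2 I\succ 0$, the matrix $B^\top\Omega^{-1}B$ is positive semidefinite and $\mathbf{e}^\top B^\top\Omega^{-1}B\,\mathbf{e}=\|\Omega^{-1/2}B\mathbf{e}\|^2>0$ as soon as $B\mathbf{e}\neq\mathbf{0}$. Here $B=K_{LL}\Sigma_{LL}^{-1}$ with $\Sigma_{LL}=K_{LL}+\tau_L^2 I$; under the standing assumption that the RBF Gram matrix $K_{LL}$ is positive definite at distinct design locations, $B$ is a product of invertible matrices and hence invertible, so $B\mathbf{e}\neq\mathbf{0}$ because $\mathbf{e}\neq\mathbf{0}$ ($\mathcal{I}$ nonempty). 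Combined with $\rho\neq 0$, this yields $c_2\neq 0$.

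Finally, a quadratic in $\Delta$ with nonzero leading coefficient diverges in modulus, so $|s_\rho(\mathbf{y})|\to\infty$ as $|\Delta|\to\infty$. Because $A=\mathbb{E}[\mathbf{r}_L^\top B^\top\Omega^{-1}B\mathbf{r}_L]>0$ is a fixed finite constant, \eqref{eq:IF_def} gives $|\mathrm{IF}(\mathbf{y};\rho)|=A^{-1}|s_\rho(\mathbf{y})|\to\infty$, completing the argument. Conceptually this is the one-dimensional specialisation of Proposition~\ref{prop:IF_outlier}: the unbounded growth of the score is driven by the same $-\rho\,\mathbf{u}^\top B^\top\Omega^{-1}B\mathbf{u}$ term, evaluated here along the fixed ray $\mathbf{u}=\Delta\mathbf{e}$, so the principal obstacle reduces to verifying that this ray is not annihilated by $B$.
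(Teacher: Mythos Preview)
Your argument is correct and follows essentially the same route as the paper: substitute $\mathbf{r}_L=\mathbf{r}_L^0+\Delta\mathbf{e}$ into the score \eqref{eq:score}, identify the dominant $-\rho\,\Delta^2\,\mathbf{e}^\top B^\top\Omega^{-1}B\,\mathbf{e}$ term, verify this coefficient is nonzero via positive definiteness of $B^\top\Omega^{-1}B$, and conclude by dividing through by the fixed finite constant $A$. Your justification that $B$ is invertible (hence $B\mathbf{e}\neq\mathbf{0}$) is slightly more explicit than the paper's, which simply asserts $M=B^\top\Omega^{-1}B\succ0$ from the standing covariance assumptions, but the underlying reasoning is identical.
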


These theoretical results show that the Gaussian MLE converges to an attenuated pseudo-true value (bounded asymptotic bias). Yet, its influence function is unbounded for both sparse outliers and block-wise level shifts, implying that finite-sample distortions can be arbitrarily large as contamination magnitude increases or as a few LF points attain extreme leverage. Thus, even a few contaminated LF observations can induce arbitrarily large distortions in~$\hat\rho$
and, through the conditional dependence $f_H(\bm{x})=\rho f_L(\bm{x})+\delta(\bm{x})$, propagate these distortions to the high-fidelity predictions.

There are generally two ways to robustify this multifidelity setting. First, at the level of the cross-fidelity relation~\eqref{eq:cond_relation}, the Gaussian score~\eqref{eq:score} can be replaced by its Huber-weighted counterpart, where large standardised LF residuals receive reduced weights. This local approach directly limits the leverage of anomalous LF data on~$\hat\rho$. The resulting estimator~$\hat\rho_H$ only has a bounded influence, ensuring that even extreme LF outliers or level shifts cannot cause unbounded perturbations in~$\hat\rho$. Since the HF conditional mean depends linearly on~$\mathbf{r}_L$, bounded perturbations in~$B W \mathbf{r}_L$ translate into bounded perturbations of the HF predictor~$\hat{\mathbf{y}}_H$. Alternatively, second, robustness can therefore be enforced globally by replacing the Gaussian log-likelihood with a Huber loss on the standardised HF residuals. This global objective down-weights large deviations in~$\mathbf{r}_H$, regardless of whether they stem from genuine HF anomalies or from propagated LF contamination via the linear coupling $f_H(\bm{x})=\rho f_L(\bm{x})+\delta(\bm{x})$. In this formulation, the Huber loss acts as a global influence limiter across fidelities, yielding stable joint estimation of $(\theta_L,\theta_\delta,\rho)$ even under mixed contamination.

\subsection{Global robustification via Huber weighting}\label{sec:robust}

The Gaussian log-likelihood in~\eqref{eq:loglik} treats all residuals as equally reliable and is therefore sensitive to extreme deviations in either fidelity. To improve stability under contamination, we replace the quadratic log-likelihood loss with a \emph{Huber loss} applied to the standardised high-fidelity residuals. Because the high-fidelity process satisfies $f_H(\bm{x})=\rho f_L(\bm{x})+\delta(\bm{x})$, any distortion in the low-fidelity data propagates linearly to the high-fidelity residuals. Down-weighting large high-fidelity residuals thus mitigates the effect of both direct HF anomalies and LF contamination.

Let 
\[
\mathbf{r}_H = \mathbf{y}_H - \boldsymbol{\mu}_H(\Theta)
\quad \text{with} \quad
\boldsymbol{\mu}_H(\Theta) = \rho\,\boldsymbol{\mu}_L + \boldsymbol{\mu}_\delta,
\]
denote the residuals of the high-fidelity model at locations
$\mathbf{x}_H=\{x_j^H\}_{j=1}^{n_H}$.
Let
\[
W_{HH}(\Theta)
=\big(\rho^2 K_{LL} + K_\delta + \tau_H^2 I_{n_H}\big)^{-1}
\]
denote the precision matrix corresponding to the high-fidelity block of $\Sigma(\Theta)$ in~\eqref{eq:joint_obs}. The residuals are standardised (whitened) using $W_{HH}^{1/2}$,
\[
\tilde{\mathbf{r}}_H = W_{HH}^{1/2}\mathbf{r}_H,
\]
which yields approximately homoscedastic unit-variance components under the ideal Gaussian model.

To robustify the estimation, we replace the quadratic term in the Gaussian log-likelihood~\eqref{eq:loglik} by a \emph{Huber loss} applied to the whitened high-fidelity residuals~$\tilde{\mathbf{r}}_H=W_{HH}^{1/2}\mathbf{r}_H$. This modification preserves the Gaussian form for small residuals but down-weights the contribution of large deviations, thereby limiting their influence on parameter estimation. The Huber loss function is defined as
\begin{equation}
    \rho_\delta(r) =
\begin{cases}
\tfrac{1}{2}r^2, & |r|\le \delta,\\[2pt]
\delta(|r|-\tfrac{1}{2}\delta), & |r|>\delta,
\end{cases}
\label{eq:Huber_obj}
\end{equation}

where the transition constant~$\delta$ controls the boundary between the quadratic and linear regimes. Its value is determined from a robust scale estimate of the standardised residuals using the median absolute deviation,
\[
\hat{s} = \frac{\operatorname{median}_i(|\tilde{r}_{H,i}|)}{0.6745}, 
\qquad 
\delta = 1.345\,\hat{s}.
\]
Define the global Huber estimator as the minimiser of the robust objective \eqref{eq:Huber_obj}:
\begin{equation}
\widehat{\Theta}_H \;\in\;
\arg\min_{\Theta}\;
\mathcal{H}(\Theta)
\;=\;
\arg\min_{\Theta }
\sum_{i=1}^{n_H}
\rho_\delta\!\Big(
\big(W_{HH}^{1/2}(\Theta)\,[\,\mathbf{y}_H-\boldsymbol{\mu}_H(\Theta)\,]\big)_i
\Big),
\label{eq:ThetaHatHuber}
\end{equation}
where
\(
W_{HH}(\Theta)
=\big(\rho^2 K_{LL} + K_\delta + \tau_H^2 I_{n_H}\big)^{-1}
\)
is the HF precision block from \eqref{eq:joint_obs}, and \(\boldsymbol{\mu}_H(\Theta)=\rho\,\boldsymbol{\mu}_L+\boldsymbol{\mu}_\delta\). Let
\(
\tilde{\mathbf r}_H(\Theta)=W_{HH}^{1/2}(\Theta)\,[\,\mathbf{y}_H-\boldsymbol{\mu}_H(\Theta)\,]
\)
and \(\psi_\delta=\rho'_\delta\). In the following theorem, we demonstrate that this global robustification results in a bounded influence of both shift types discussed above.

\begin{theorem}[Bounded influence of the global Huber estimator]
\label{thm:bounded_IF_global}
Let $\widehat{\Theta}_H$ denote the estimator given by~\eqref{eq:ThetaHatHuber}, and let $\mathrm{IF}_H(\mathbf{y})$ denote its influence function. Assume that the low-fidelity process may be contaminated by either (i) sparse outliers $\mathbf{u}$ with $\|\mathbf{u}\|\to\infty$ at finitely many indices, or (ii) block-wise mean shifts $\mathbf{u}=\mathbf{b}$ with $b_i=\Delta$ on a subset $\mathcal{I}$ and $b_i=0$ otherwise, with $|\Delta|\to\infty$. Define $\tilde{\mathbf r}_H(\Theta)
= W_{HH}^{1/2}(\Theta)\,[\mathbf y_H-\boldsymbol{\mu}_H(\Theta)]$ and the Huber score
$\psi_\delta=\rho'_\delta$ with fixed $\delta>0$. Let
\[
S(\Theta;\mathbf y)=\sum_{i=1}^{n_H}\psi_\delta(\tilde r_{H,i}(\Theta))\,g_i(\Theta),
\qquad
g_i(\Theta):=\frac{\partial \tilde r_{H,i}(\Theta)}{\partial \Theta^\top},
\]
and suppose the Jacobian
$J(\Theta):=\mathbb E\big[\partial S(\Theta;\mathbf Y)/\partial \Theta^\top\big]\big|_{\Theta=\Theta_0}$
exists at the ideal model and is nonsingular.

Then, for any fixed $\delta>0$,
\[
\sup_{\mathbf y}\ \|\mathrm{IF}_H(\mathbf y)\| \;\le\; C_\delta \;<\;\infty,
\]
with the explicit bound
\begin{equation}
C_\delta \;\le\; \|J^{-1}\|\;\delta\;\sum_{i=1}^{n_H}\|g_i(\Theta)\|,
\label{eq:Cdelta-general}
\end{equation}
where $\|\cdot\|$ denotes the standard Euclidean norm for vectors and its induced operator norm for matrices.
\end{theorem}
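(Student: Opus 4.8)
The plan is to treat $\widehat{\Theta}_H$ as an M-estimator defined implicitly by the stationarity condition $S(\widehat{\Theta}_H;\mathbf{y})=\mathbf{0}$ of the robust objective \eqref{eq:ThetaHatHuber}, and to obtain its influence function by the standard von Mises / implicit-function argument. Contaminating the ideal model $F$ by a fraction $\epsilon$ in the direction of a point mass at $\mathbf{y}$ and differentiating the population estimating equation $\mathbb{E}_{F_\epsilon}[S(\Theta;\mathbf{Y})]=\mathbf{0}$ at $\epsilon=0$ gives, after inverting the expected Jacobian $J(\Theta_0)$ (nonsingular by hypothesis),
\[
\mathrm{IF}_H(\mathbf{y}) \;=\; -\,J(\Theta_0)^{-1}\,S(\Theta_0;\mathbf{y}),
\qquad
S(\Theta_0;\mathbf{y})=\sum_{i=1}^{n_H}\psi_\delta\!\big(\tilde r_{H,i}(\Theta_0)\big)\,g_i(\Theta_0).
\]
The sign is immaterial for the norm bound, so the whole problem reduces to controlling $\|S(\Theta_0;\mathbf{y})\|$ uniformly in $\mathbf{y}$ and then propagating that bound through $\|J^{-1}\|$.

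The key mechanism is the saturation of the Huber score. Differentiating \eqref{eq:Huber_obj} gives $\psi_\delta(r)=r$ for $|r|\le\delta$ and $\psi_\delta(r)=\delta\,\mathrm{sign}(r)$ for $|r|>\delta$, so that $|\psi_\delta(r)|\le\delta$ for every $r\in\mathbb{R}$. Consequently each summand satisfies $\|\psi_\delta(\tilde r_{H,i})\,g_i\|\le\delta\,\|g_i\|$ \emph{irrespective of how large the whitened residual $\tilde r_{H,i}$ becomes}, and the triangle inequality yields $\|S(\Theta_0;\mathbf{y})\|\le\delta\sum_{i=1}^{n_H}\|g_i(\Theta_0)\|$. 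Submultiplicativity of the induced operator norm then gives $\|\mathrm{IF}_H(\mathbf{y})\|\le\|J^{-1}\|\,\delta\sum_i\|g_i(\Theta_0)\|$, uniformly in $\mathbf{y}$, which is exactly \eqref{eq:Cdelta-general}.

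It remains to verify that this bound is genuinely finite under the two contamination regimes. For sparse outliers with $\|\mathbf{u}\|\to\infty$ and for block shifts with $|\Delta|\to\infty$, the affected whitened residuals diverge, but they merely enter the linear branch of $\psi_\delta$, where the score is clipped to $\pm\delta$; this is the precise point where the Gaussian score \eqref{eq:score} of Propositions~\ref{prop:IF_outlier}--\ref{prop:IF_shift}, being linear in the residual, diverges while the Huber score does not. The main obstacle, and the step requiring genuine care, is the control of the gradient factors $g_i(\Theta_0)=\partial\tilde r_{H,i}/\partial\Theta^\top$: because $\psi_\delta$ is monotone rather than redescending, $|\psi_\delta|$ does not vanish for large residuals, so finiteness of the product hinges on $\sum_i\|g_i(\Theta_0)\|$ staying bounded as the contamination grows. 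The delicate subtlety is that $g_i$ contains, besides the data-free mean-derivative term $-\,W_{HH}^{1/2}\partial\boldsymbol{\mu}_H/\partial\Theta$, a whitening-derivative term $(\partial W_{HH}^{1/2}/\partial\Theta)\,\mathbf{r}_H$ that is \emph{linear} in $\mathbf{r}_H$ and could a priori inherit the diverging perturbation.

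I would resolve this by exploiting that the global objective is built from the marginal high-fidelity residual $\mathbf{r}_H=\mathbf{y}_H-\boldsymbol{\mu}_H(\Theta)$, so that additive low-fidelity observation contamination does not enter $\mathbf{r}_H$, and that the influence functional is linearised at the \emph{fixed} ideal parameter $\Theta_0$ and \emph{fixed} design: the whitening $W_{HH}^{1/2}(\Theta_0)$, its parameter derivatives, and $\partial\boldsymbol{\mu}_H/\partial\Theta$ are all determined by the kernel hyperparameters and the observation locations rather than by the contamination magnitude. Hence $\sum_i\|g_i(\Theta_0)\|$ is a finite, design-dependent constant, $\|J^{-1}\|<\infty$ by the nonsingularity assumption, and $C_\delta<\infty$ follows. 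The one place where I would need to argue most carefully is precisely the boundedness of the whitening-derivative contribution to $g_i$: I expect this to be the crux of the rigorous proof, with the pairing of the clipped factor $\psi_\delta$ against the leverage entries of $(\partial W_{HH}^{1/2}/\partial\Theta)$ kept under control by the fixed kernel geometry at $\Theta_0$.
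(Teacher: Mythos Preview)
Your proposal is correct and follows essentially the same route as the paper: express the influence function as $\mathrm{IF}_H(\mathbf{y})=-J(\Theta_0)^{-1}S(\Theta_0;\mathbf{y})$ via the standard M-estimator linearisation, use $|\psi_\delta|\le\delta$ together with the triangle inequality to bound $\|S\|\le\delta\sum_i\|g_i\|$, and finish with submultiplicativity of the operator norm. Your extra discussion of why $\sum_i\|g_i(\Theta_0)\|$ stays finite under LF contamination is in fact more careful than the paper's own proof, which simply asserts boundedness and defers explicit control of the $g_i$ terms to the subsequent Lemmas~\ref{lem:alt-bounds-general} and~\ref{lem:alt-bounds-fixed}.
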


Theorem~\ref{thm:bounded_IF_global} establishes that the global Huber-based objective yields an estimator with uniformly bounded local influence, irrespective of the magnitude or form of contamination in the LF process. This extends standard robustness theory to the multi-fidelity Gaussian-process framework, where distortions in the low-fidelity data propagate indirectly through the linear relation in~\eqref{eq:cond_relation} into the high-fidelity likelihood. The robustness arises from the structure of the Huber score $\psi_\delta$, whose derivative remains capped at~$\pm\delta$, limiting the contribution of each standardised residual $\tilde r_{H,i}$ and thereby preventing unbounded leverage effects even under extreme contamination.

The constants in the bound~\eqref{eq:Cdelta-general} provide further insight into how robustness is achieved. The factor $\|J^{-1}\|$ quantifies the curvature and identifiability of the robust objective in a neighbourhood of the true parameter $\Theta_0$: under standard regularity conditions, $J$ is nonsingular and $\|J^{-1}\|$ remains finite, guaranteeing local stability of the estimator. The additional factors in~\eqref{eq:Cdelta-general} decompose the influence constant~$C_\delta$ into interpretable components, revealing how the conditioning of the whitening operator $W_{HH}^{1/2}$ and the sensitivity of the high-fidelity mean $\boldsymbol{\mu}_H(\Theta)$ jointly determine the attainable degree of robustness. A well-conditioned whitening matrix and moderate gradient norms reduce the propagation of contamination effects and thus lower~$C_\delta$.

Because $|\psi_\delta(r)|\le\delta$ for all residuals, the contamination magnitude itself does not appear in the bound. Hence, the global Huber loss ensures finite influence for both isolated outliers and block-wise level shifts---an essential contrast to the unbounded distortions of the Gaussian maximum-likelihood estimator established in Propositions~\ref{prop:IF_outlier} and \ref{prop:IF_shift}. This property confirms that global robustification provides a safeguard against extreme or systematic anomalies in multi-fidelity spatiotemporal data.

To make these dependencies explicit in practice, we distinguish two computational approaches that can be followed in robust multi-fidelity estimation. The first corresponds to the general case in which $W_{HH}^{1/2}$ depends on~$\Theta$, as when the covariance matrix of the high-fidelity process is re-estimated in each optimisation step. Here, the bound must capture the joint sensitivity of both $W_{HH}^{1/2}$ and $\boldsymbol{\mu}_H(\Theta)$, leading to the Lipschitz-type result in Lemma~\ref{lem:alt-bounds-general}. The second regime occurs in nested implementations, where $W_{HH}^{1/2}$ is held fixed while updating~$\Theta$. In this simplified setting, the dependence on the whitening operator disappears, and the bound reduces to the expression in Lemma~\ref{lem:alt-bounds-fixed}. This second approach is computationally much more efficient, at the cost of reduced robustification bounds. 

\begin{lemma}[General whitening]
\label{lem:alt-bounds-general}
Let $\tilde{\mathbf r}_H(\Theta)=W_{HH}^{1/2}(\Theta)\,[\mathbf y_H-\boldsymbol{\mu}_H(\Theta)]$ and
$g_i(\Theta):=\partial \tilde r_{H,i}(\Theta)/\partial \Theta^\top$.
Assume that the parameter space $\Xi$ of $\Theta$ is compact and $\Theta\in \Xi$, $W_{HH}^{1/2}(\Theta)$ is continuously differentiable on $\Xi$, and
\[
\sup_{\Theta\in\Xi}\Big\|\frac{\partial W_{HH}^{1/2}}{\partial \Theta}\Big\|_2 \le L_W,\quad
\sup_{\Theta\in\Xi}\Big\|\frac{\partial \boldsymbol{\mu}_H}{\partial \Theta}\Big\|_2 \le L_\mu,\quad
\sup_{\Theta\in\Xi}\|W_{HH}^{1/2}(\Theta)\|_2 \le \kappa_W,
\]
and set $R:=\sup_{\Theta\in\Xi}\|\mathbf r_H(\Theta)\|_2<\infty$, with $\mathbf r_H(\Theta)=\mathbf y_H-\boldsymbol{\mu}_H(\Theta)$.
Then
\begin{equation}
\sum_{i=1}^{n_H}\|g_i(\Theta)\| \ \le\ n_H\big(L_W\,R+\kappa_W\,L_\mu\big),
\qquad
C_\delta \ \le\ \|J^{-1}\|\,\delta\,n_H\big(L_W\,R+\kappa_W\,L_\mu\big),
\label{eq:Cdelta-Lipschitz-split}
\end{equation}
where $C_\delta$ is the constant in \eqref{eq:Cdelta-general} and $J$ is the population Jacobian from Theorem~\ref{thm:bounded_IF_global}. 
\end{lemma}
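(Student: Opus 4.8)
The plan is to reduce the claimed bound on $C_\delta$ to a bound on the aggregate Jacobian norm $\sum_{i=1}^{n_H}\|g_i(\Theta)\|$, since the second inequality in \eqref{eq:Cdelta-Lipschitz-split} then follows immediately by substituting the first into the general influence bound \eqref{eq:Cdelta-general} of Theorem~\ref{thm:bounded_IF_global}. All the real work therefore lies in controlling the row norms of the whitened-residual Jacobian in terms of the four structural constants $L_W$, $L_\mu$, $\kappa_W$, and $R$.

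First I would write the whitened residual as $\tilde{\mathbf r}_H(\Theta)=W_{HH}^{1/2}(\Theta)\,\mathbf r_H(\Theta)$ with $\mathbf r_H(\Theta)=\mathbf y_H-\boldsymbol{\mu}_H(\Theta)$ and differentiate by the product/chain rule. Since $\partial\mathbf r_H/\partial\Theta^\top=-\,\partial\boldsymbol{\mu}_H/\partial\Theta^\top$, the full Jacobian $G(\Theta):=\partial\tilde{\mathbf r}_H/\partial\Theta^\top$ (an $n_H\times p$ matrix whose rows are the $g_i$) splits into two terms,
\[
G(\Theta)=\Big[\frac{\partial W_{HH}^{1/2}}{\partial\Theta^\top}\Big]\mathbf r_H(\Theta)\;-\;W_{HH}^{1/2}(\Theta)\,\frac{\partial\boldsymbol{\mu}_H}{\partial\Theta^\top},
\]
where the first term denotes the matrix whose $k$-th column is $(\partial W_{HH}^{1/2}/\partial\Theta_k)\mathbf r_H$. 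The next step is to pass from row norms to the operator norm: for each $i$ one has $g_i^\top=G^\top e_i$, hence $\|g_i\|=\|G^\top e_i\|\le\|G\|_2$, and summing over the $n_H$ rows gives $\sum_i\|g_i\|\le n_H\|G\|_2$. This is precisely where the prefactor $n_H$ in the statement originates.

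It then remains to bound $\|G\|_2$ by the triangle inequality together with submultiplicativity of the operator norm. The second term is immediate from the hypotheses, $\|W_{HH}^{1/2}(\partial\boldsymbol{\mu}_H/\partial\Theta^\top)\|_2\le\kappa_W L_\mu$, while the first term is bounded by $L_W\,\|\mathbf r_H\|_2\le L_W R$ after contracting the derivative tensor against $\mathbf r_H$ and invoking the uniform bound $R=\sup_{\Theta\in\Xi}\|\mathbf r_H(\Theta)\|_2$. Combining the two pieces yields $\|G\|_2\le L_W R+\kappa_W L_\mu$, hence the first inequality in \eqref{eq:Cdelta-Lipschitz-split}; substituting this into \eqref{eq:Cdelta-general} closes the proof.

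The main obstacle I anticipate is making rigorous the contraction bound $\|[\partial W_{HH}^{1/2}/\partial\Theta^\top]\mathbf r_H\|_2\le L_W R$, which requires pinning down the precise meaning of the hypothesis $\sup_{\Theta}\|\partial W_{HH}^{1/2}/\partial\Theta\|_2\le L_W$. It has to be read as an operator-norm control of the derivative regarded as a linear map into $\mathbb{R}^{n_H\times p}$ (equivalently, a uniform bound on each slice $\partial W_{HH}^{1/2}/\partial\Theta_k$ compatible with the column-stacking geometry), so that contracting against $\mathbf r_H$ costs only the factor $\|\mathbf r_H\|_2$. Once that convention is stated, everything else—the product rule, the triangle and submultiplicative steps, and the final substitution—is routine. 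A secondary point worth flagging is that the compactness of $\Xi$ and the continuous differentiability of $W_{HH}^{1/2}$ are exactly what guarantee that $L_W$, $\kappa_W$, and $R$ are finite and attained, so that the resulting bound is genuinely uniform over the parameter space rather than merely pointwise.
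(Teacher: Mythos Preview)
Your proposal is correct and follows essentially the same route as the paper: product-rule decomposition of the whitened-residual Jacobian, row-wise bound $\|g_i\|\le\|G\|_2$ via $e_i^\top G$, triangle inequality plus submultiplicativity to get $\|G\|_2\le L_W R+\kappa_W L_\mu$, then substitution into \eqref{eq:Cdelta-general}. Your flagging of the tensor-norm convention for $\partial W_{HH}^{1/2}/\partial\Theta$ is, if anything, more careful than the paper's own handling of that point.
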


\begin{lemma}[Fixed whitening]
\label{lem:alt-bounds-fixed}
Suppose $W_{HH}^{1/2}$ is held fixed while updating $\Theta$.
Then
\[
g_i(\Theta)=-\,e_i^\top W_{HH}^{1/2}\,\frac{\partial \boldsymbol{\mu}_H}{\partial \Theta},
\quad
G(\Theta):=\begin{bmatrix}g_1(\Theta)^\top\\ \vdots \\ g_{n_H}(\Theta)^\top\end{bmatrix}
= -\,W_{HH}^{1/2}\,\frac{\partial \boldsymbol{\mu}_H}{\partial \Theta}.
\]
Let $||\cdot||$ denote the Frobenius norm. Consequently,
\begin{equation}
\sum_{i=1}^{n_H}\|g_i(\Theta)\|
\ \le\ \sqrt{n_H}\,\|G(\Theta)\|_F
\ \le\ \sqrt{n_H}\,\|W_{HH}^{1/2}\|
\Big\|\frac{\partial \boldsymbol{\mu}_H}{\partial \Theta}\Big\|_F,
\label{eq:sumgi-fixed}
\end{equation}
and the influence bound specialises to
\begin{equation}
C_\delta
\ \le\ \|J^{-1}\|\,\delta\,\sqrt{n_H}\,\|W_{HH}^{1/2}\|\,
\Big\|\frac{\partial \boldsymbol{\mu}_H}{\partial \Theta}\Big\|_F.
\label{eq:Cdelta-fixedW-split}
\end{equation}
\end{lemma}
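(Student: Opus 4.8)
The plan is to specialise the general influence bound of Theorem~\ref{thm:bounded_IF_global} by computing the row-gradients $g_i(\Theta)$ in closed form under the fixed-whitening assumption, and then to control the sum $\sum_i\|g_i(\Theta)\|$ appearing in~\eqref{eq:Cdelta-general} by two elementary norm inequalities. The whole argument is a direct substitution into the already-established bound, so no new analytic machinery is needed beyond the chain rule, Cauchy--Schwarz, and submultiplicativity of the Frobenius norm.

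First I would derive the claimed form of $g_i$. Because $W_{HH}^{1/2}$ is held fixed, the only $\Theta$-dependence in $\tilde r_{H,i}(\Theta)=e_i^\top W_{HH}^{1/2}\,[\mathbf y_H-\boldsymbol{\mu}_H(\Theta)]$ enters through $\boldsymbol{\mu}_H(\Theta)$, and $\mathbf y_H$ is independent of $\Theta$. Differentiating by the chain rule gives $g_i(\Theta)=\partial \tilde r_{H,i}/\partial\Theta^\top = -\,e_i^\top W_{HH}^{1/2}\,\partial\boldsymbol{\mu}_H/\partial\Theta$, the first displayed identity. Since $e_i^\top W_{HH}^{1/2}$ is precisely the $i$-th row of $W_{HH}^{1/2}$, stacking these row-gradients reproduces $W_{HH}^{1/2}$ as a left factor, yielding $G(\Theta)=-\,W_{HH}^{1/2}\,\partial\boldsymbol{\mu}_H/\partial\Theta$, the second identity. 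In particular each $g_i(\Theta)$ is (up to transpose) the $i$-th row of $G(\Theta)$, so $\|G(\Theta)\|_F^2=\sum_{i=1}^{n_H}\|g_i(\Theta)\|^2$.

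Next I would bound the sum. Applying Cauchy--Schwarz to the all-ones vector and the vector of row norms $(\|g_1\|,\dots,\|g_{n_H}\|)$ gives $\sum_{i=1}^{n_H}\|g_i(\Theta)\|\le\sqrt{n_H}\,\|G(\Theta)\|_F$, which is the first inequality in~\eqref{eq:sumgi-fixed}. For the second inequality I would invoke submultiplicativity of the Frobenius norm on the factorisation $G=-W_{HH}^{1/2}\,\partial\boldsymbol{\mu}_H/\partial\Theta$, namely $\|G\|_F\le\|W_{HH}^{1/2}\|\,\|\partial\boldsymbol{\mu}_H/\partial\Theta\|_F$. Substituting the resulting bound on $\sum_i\|g_i\|$ into~\eqref{eq:Cdelta-general} immediately produces~\eqref{eq:Cdelta-fixedW-split}.

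The computation is entirely routine, so the only point genuinely requiring care is norm bookkeeping. The lemma locally declares $\|\cdot\|$ to be the Frobenius norm, which is exactly what makes the submultiplicative step $\|AB\|_F\le\|A\|_F\|B\|_F$ apply with the symbol $\|W_{HH}^{1/2}\|$ as written; reading it instead as an operator norm would yield the sharper mixed bound $\|AB\|_F\le\|A\|_2\|B\|_F$ but a different constant, so I would flag the convention explicitly to keep~\eqref{eq:Cdelta-fixedW-split} self-consistent with~\eqref{eq:Cdelta-general}. Everything else, in particular the existence and nonsingularity of $J$ and the validity of~\eqref{eq:Cdelta-general} itself, is inherited verbatim from Theorem~\ref{thm:bounded_IF_global} once the fixed-whitening $g_i$ have been identified.
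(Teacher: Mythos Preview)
Your proposal is correct and follows essentially the same route as the paper's proof: differentiate under fixed whitening to obtain $g_i=-e_i^\top W_{HH}^{1/2}\,\partial\boldsymbol{\mu}_H/\partial\Theta$, stack to $G$, apply Cauchy--Schwarz on the row norms to get the $\sqrt{n_H}\|G\|_F$ factor, then use submultiplicativity and substitute into~\eqref{eq:Cdelta-general}. Your remark on the norm convention is well taken---the paper's own proof actually writes the mixed bound $\|AB\|_F\le\|A\|_2\|B\|_F$ with the operator norm on $W_{HH}^{1/2}$, so the slight ambiguity you flag is present in the original as well.
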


Moreover, in practice, constructing the full whitening transformation $W_{HH}^{1/2}$ can be computationally demanding for large spatiotemporal datasets, as it involves either Cholesky or eigendecomposition of the high-fidelity covariance block $K_{HH}$. A common simplification---also adopted in our implementation---is to approximate $W_{HH}$ by its diagonal, and applying elementwise standardisation rather than full whitening. This strategy substantially reduces computational cost and memory requirements from $\mathcal{O}(n_H^3)$ to $\mathcal{O}(n_H)$, enabling scalable, robust inference even in high-resolution multi-fidelity settings. 

To assess the practical implications of these theoretical findings, we next conduct a Monte Carlo study comparing the Gaussian and Huber-robust estimators under controlled contamination scenarios. The simulations quantify the finite-sample behaviour of bias and variance in $\hat\rho$ and confirm the stabilising effect of the global Huber loss in both outlier and level-shift settings.

\section{Monte-Carlo simulation}\label{sec:MC}

To examine the empirical performance of the proposed robustification, we conducted a Monte Carlo study based on realistically structured spatiotemporal data. The simulated dataset consists of observations from 16 spatial stations (arranged on a regular $4\times4$ grid) recorded at 15 equally spaced time points over the unit interval (see Appendix~\ref{Simulation} for full details of the data-generation process). To emulate measurement anomalies, controlled perturbations were introduced into the low-fidelity observations, with both the \emph{frequency} (proportion of affected points) and the \emph{magnitude} (severity of deviation) systematically varied. The resulting design, summarised in Table~\ref{tab:perturbation_parameters}, comprises nine distinct contamination scenarios
. We performed 100 MC replications.

\begin{table}
\centering
\caption{Perturbation parameters used in the simulation experiment.}
\vspace{2mm}
\begin{tabular}{ccc}
\hline
\textbf{Parameter} & \textbf{Description} & \textbf{Values} \\
\hline
$m$ & Magnitude of the distortion & $\{2,\,5,\,10\}$ \\
$\eta$ & Frequency of the distortion & $\{0.1,\,0.3,\,0.5\}$ \\
$n_{\text{runs}}$ & Runs per scenario & $100$ \\
\hline
\end{tabular}
\label{tab:perturbation_parameters}
\end{table}

For each Monte Carlo replication, the spatial locations were randomly split into training and test sets, with 80\% of the stations used for model fitting and the remaining 20\% reserved for evaluation. The entire time series at the test stations was withheld to preserve the temporal dependence structure \citep[see][for a review on cross-validation in spatiotemporal statistics]{otto2024review}. Two spatiotemporal multi-fidelity models were then fitted to the training data using a squared-exponential covariance function with a separable, multiplicative space-time structure. The \emph{classical} model was estimated by minimising the negative log marginal likelihood in~\eqref{eq:loglik}, whereas the \emph{robust} model employed the Huber-loss estimator given by~\eqref{eq:ThetaHatHuber}. Both optimisation procedures used a quasi-Newton algorithm with identical random initialisation to ensure comparability. 

Model performance was evaluated at the high-fidelity test stations $\{\mathbf{x}_i\}_{i=1}^{n_{\text{test}}}$ using the mean absolute error (MAE) and root mean squared error (RMSE),
\[
\mathrm{MAE}=\frac{1}{n_{\text{test}}}\sum_{i=1}^{n_{\text{test}}}\bigl|\hat{y}_H(\mathbf{x}_i)-y_H(\mathbf{x}_i)\bigr|,\qquad
\mathrm{RMSE}=\sqrt{\frac{1}{n_{\text{test}}}\sum_{i=1}^{n_{\text{test}}}\bigl(\hat{y}_H(\mathbf{x}_i)-y_H(\mathbf{x}_i)\bigr)^2}.
\]
For each successful replication, we recorded the estimated hyperparameters, model predictions, and corresponding test targets, from which MAE and RMSE were computed and subsequently aggregated across all contamination scenarios. All simulation code and results are available in the accompanying GitHub repository: \url{https://github.com/Pietrostat193/Robust-Multi-fidelity-Modelling/tree/main/Simulation/MatlabSimulations/PreliminaryExperiment}.

Table~\ref{tab:robust_comparison} summarises the predictive performance of the classical and robust multi-fidelity estimators across contamination magnitudes~$m$ and frequencies~$\eta$. The results reveal a clear pattern: the performance of the classical Gaussian maximum-likelihood estimator deteriorates systematically with increasing contamination magnitude and frequency, whereas the robust estimator based on the global Huber loss remains largely stable across all scenarios.

For small and infrequent perturbations ($m=2$, $\eta=0.1$), both estimators perform comparably, indicating that the Huber loss does not substantially reduce efficiency in nearly Gaussian conditions. As the contamination becomes stronger or more frequent, the MLE rapidly loses stability--its MAE and RMSE increase almost linearly with~$m$--reflecting the unbounded influence of large deviations in the low-fidelity process. In contrast, the robust estimator maintains a nearly constant error level even under severe contamination ($m=10$, $\eta=0.5$), confirming the theoretical results of Section~\ref{sec:methods} that the Huber objective bounds the influence of anomalous observations.

A slight efficiency loss of the robust estimator can be observed in the weakest contamination scenario, where its MAE exceeds that of the Gaussian MLE. This behaviour is expected: in purely Gaussian settings, the MLE is asymptotically efficient, while the Huber estimator sacrifices a small amount of efficiency to gain robustness. The trade-off is negligible for practical purposes and becomes favourable as soon as mild contamination is present, where the robust approach consistently outperforms the classical one. To quantify this effect, we computed the relative efficiency of the robust estimator with respect to the MLE as
\[
\mathrm{Eff}_{\mathrm{rel}}=\frac{\text{RMSE}_{\text{MLE}}^2}{\text{RMSE}_{\text{Huber}}^2}.
\]
In the weakest contamination setting ($m=2$), the robust estimator shows a larger efficiency loss than we would expect. This behaviour likely reflects overly conservative weighting of the standardised residuals, caused by a tuning constant $\delta=1.345\,\hat{s}$ that is too small relative to the effective residual scale and/or inflation effects due to diagonal whitening by $W_{HH}^{1/2}$. Together with the natural tendency of the MLE to adapt more flexibly to small random fluctuations in nearly clean data, this behaviour explains why the classical estimator can appear more efficient in the weakest contamination scenario, even though it rapidly deteriorates as anomalies become stronger. In practice, we recommend the following two strategies. First, the tuning constant $\delta$ can be adapted as $\delta=c\,\hat{s}$ to target a nominal Gaussian efficiency (e.g., $e(c)\approx0.95$), either using known efficiency curves or a cross-validation on pre-cleaned data. This adjustment ensures that moderate Gaussian residuals remain within the quadratic region of the Huber loss. Second, the whitening matrix $W_{HH}$ can be stabilised by using a full (Cholesky-based) whitening, a regularised approaches, for instance via $W_{HH} = (K_{HH}+\lambda I)^{-1}$ with small $\lambda$ \citep[see][]{otto2024review}, or low-rank approximations of $W_{HH}^{1/2}$. These approaches can balance scalability and statistical efficiency, maintaining robustness while avoiding excessive variance inflation.

Figures~\ref{fig:Scenario1_example} and~\ref{fig:Scenario9_example} illustrate representative predictions for the weakest and strongest contamination scenarios, respectively. In Scenario~1 ($m=2$, $\eta=0.1$; Figure~\ref{fig:Scenario1_example}), both estimators closely follow the test series, with the classical model achieving slightly lower MAE and RMSE due to its greater flexibility in adapting to small random fluctuations. In contrast, under severe contamination (Scenario~9, $m=10$, $\eta=0.5$; Figure~\ref{fig:Scenario9_example}), the robust estimator remains stable and closely aligned with the true signal, while the classical model shows pronounced deviations around anomalous peaks, confirming the theoretical finding that its influence function is unbounded under large distortions.

Overall, these results confirm that the proposed global Huber-robustification effectively stabilises parameter estimation and prediction in spatiotemporal multi-fidelity Gaussian process models, preventing the propagation of outlier effects from low-fidelity data into high-fidelity predictions.

\begin{table}
\centering
\caption{Comparison of classical and robust multi-fidelity estimators across contamination magnitudes $m$ and frequencies $\eta$. The relative efficiency is computed as $\mathrm{Eff}_{\mathrm{rel}} = (\mathrm{RMSE}_{\text{Classic}} / \mathrm{RMSE}_{\text{Robust}})^2$.}
\label{tab:robust_comparison}
\vspace{2mm}
\begin{tabular}{
  c c
  S[table-format=1.3]
  S[table-format=1.3]
  S[table-format=1.3]
  S[table-format=1.3]
  S[table-format=1.2]
}
\toprule
 &  & \multicolumn{2}{c}{\textbf{Classic}} & \multicolumn{2}{c}{\textbf{Robust}} & \textbf{Rel.\ Eff.}\\
\cmidrule(lr){3-4} \cmidrule(lr){5-6}
\textbf{$m$} & \textbf{$\eta$} 
  & \textbf{MAE} & \textbf{RMSE} 
  & \textbf{MAE} & \textbf{RMSE} 
  & \textbf{$\mathrm{Eff}_{\mathrm{rel}}$} \\
\midrule
\multirow{3}{*}{2} 
 & 0.1 & 0.593 & 0.758 & 0.988 & 1.241 & 0.37 \\
 & 0.3 & 0.700 & 0.913 & 1.065 & 1.338 & 0.47 \\
 & 0.5 & 0.906 & 1.178 & 1.096 & 1.383 & 0.72 \\[2pt]
\multirow{3}{*}{5} 
 & 0.1 & 0.675 & 0.902 & 1.085 & 1.375 & 0.43 \\
 & 0.3 & 1.067 & 1.501 & 1.128 & 1.436 & 1.09 \\
 & 0.5 & 1.547 & 2.002 & 1.273 & 1.587 & 1.59 \\[2pt]
\multirow{3}{*}{10} 
 & 0.1 & 0.846 & 1.343 & 1.100 & 1.394 & 0.93 \\
 & 0.3 & 1.563 & 2.213 & 1.111 & 1.395 & 2.52 \\
 & 0.5 & 1.998 & 2.381 & 1.317 & 1.584 & 2.26 \\
\bottomrule
\end{tabular}
\end{table}

\begin{figure}
    \centering
    \includegraphics[width=1\linewidth]{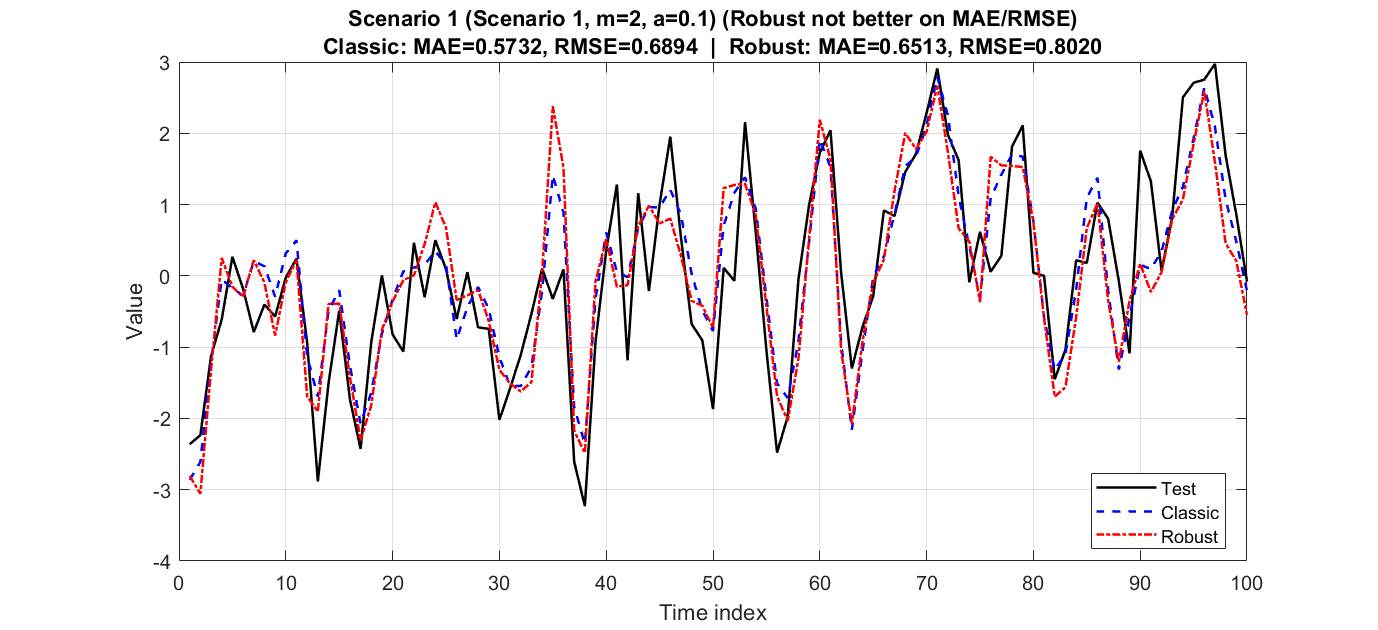}
    \caption{Illustrative prediction example from the simulation study under \textbf{Scenario~1} (low contamination).}
    \label{fig:Scenario1_example}
\end{figure}

\begin{figure}
    \centering
    \includegraphics[width=0.8\linewidth]{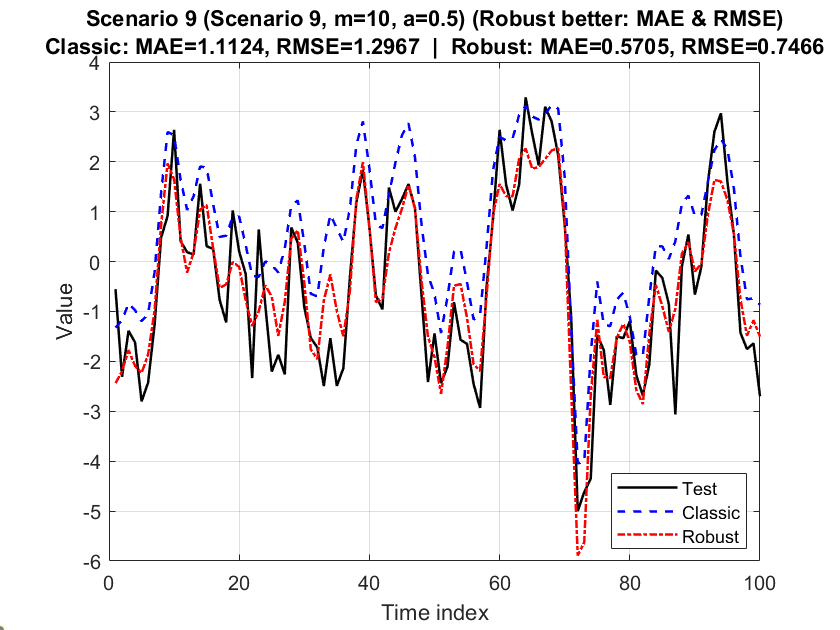}
    \caption{Illustrative prediction example from the simulation study under \textbf{Scenario~9} (high contamination).}
    \label{fig:Scenario9_example}
\end{figure}

\section{Empirical case study: multi-fidelity air quality fusion}\label{sec:empirical}

We now return to the air-quality case study introduced in Section~\ref{sec:intro}, where the objective is to integrate sparse, high-quality measurements from official monitoring stations with spatially dense but less accurate citizen-science data. We focus on the Hamburg metropolitan area, within the following coordinates $9.7^{\circ}-10.15^{\circ}E$ and $53.49^{\circ}-53.62^{\circ}N$. The high-fidelity (HF) data originate from four air quality monitoring stations operated by the German Environment Agency (UBA), based on gravimetric sensors, while the low-fidelity (LF) data stem from 40 low-cost optical sensors from \texttt{senseBox} platforms published through the \texttt{opensensemap.org}. We have taken the 15 nearest LF sites around each HF site, resulting in a total of 26 LF sensors. The openSenseMap platform, established in 2014, provides an open-access infrastructure for real-time environmental data collection and dissemination, based on a distributed network of community-operated sensors \citep{pfeil2018opensensemap}. As noted by \citet{bruch2025reviewing}, data quality in citizen-science sensor networks remains limited by inconsistent or incomplete user-provided metadata, often leading to semantic ambiguities and reduced interoperability. This heterogeneous sensing network provides an ideal testbed for assessing the robustness and predictive performance of proposed robust approaches. The locations of all measurement stations are shown in Figure \ref{fig:map}.

\begin{figure}
    \centering
    \includegraphics[width=1\linewidth]{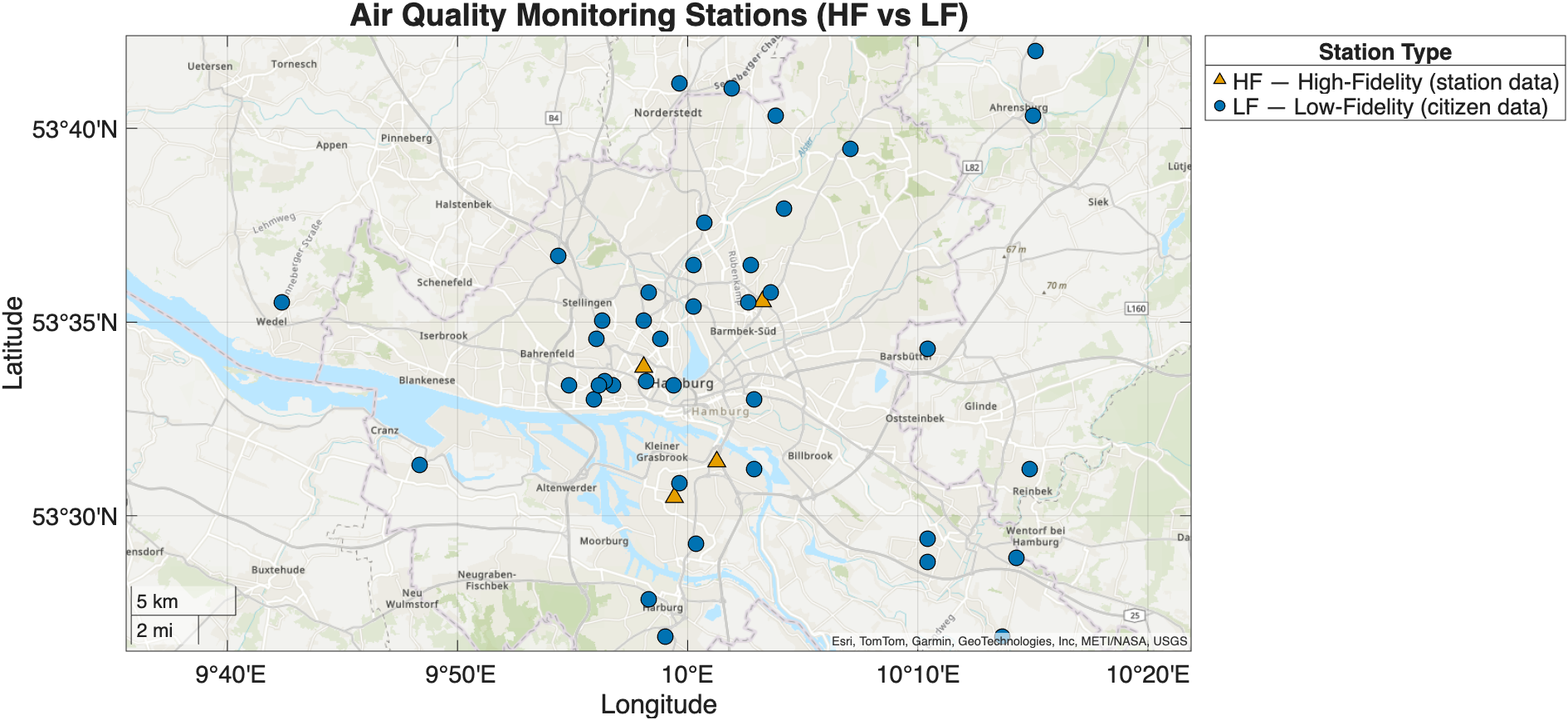}
    \caption{Map of sensor locations, colour-coded by sensor type (high-fidelity reference stations and low-cost citizen sensors).}
    \label{fig:map}
\end{figure}

Each station provides approximately 330 daily average $\text{PM}_{2.5}$ concentrations from January 1, 2021, to December 31, 2021, with slight variations across stations. This leads to a total number of observations of $\approx$13,000 for citizen-science sensors (SDS011) and $\approx$1,100 for the reference stations (STA). In Table \ref{tab:descriptive}, we provide a summary of descriptive statistics aggregated by the station type. The complete list of descriptive statistics for each sensor can be found in Table \ref{tab:descriptive_full} in the Appendix.

\begin{table}
\centering
\caption{PM\textsubscript{2.5} Summary by Sensor Type}\label{tab:descriptive}
\begin{adjustbox}{max width=\textwidth}
\begin{tabular}{lrrrrrrr}
\toprule
\textbf{Station ID} & \textbf{Count} & \textbf{Min} & \textbf{Max} & \textbf{Mean} & \textbf{Std.\ Error} & \textbf{95\% CI (Lower)} & \textbf{95\% CI (Upper)} \\
\midrule
SDS011 & 13180 & 0.082 & 999.900 & 9.326 & 0.1887 & 8.956 & 9.696 \\
STA & 1164 & 1.500 & 40.320 & 10.728 & 0.1963 & 10.343 & 11.113 \\
\bottomrule
\end{tabular}
\end{adjustbox}
\end{table}

To evaluate performance in both space and time, we employ a spatiotemporal block cross-validation scheme. The time series are partitioned into consecutive blocks of 30 time points (days), processed sequentially across the full observation period. Within each temporal block, one HF station is entirely held out for testing, while the remaining stations and all LF data are used for training. Iterating this procedure across the 11 temporal windows and four HF stations yields $11\times4=44$ distinct test sets. This combined temporal and spatial partitioning preserves temporal dependence within blocks, while providing a comprehensive assessment of predictive performance under realistic data availability conditions. We compare the performance of six competing models:
\begin{enumerate}
    \item the proposed robust multi-fidelity Gaussian process (RMFGP), which employs a global Huber loss;
    \item a well-calibrated RMFGP variant with adaptive tuning of the robustness parameter~$\delta$;
    \item the classical multi-fidelity Gaussian process (MFGP) estimated via maximum likelihood;
    \item a Bayesian spatio-temporal data-fusion model based on the INLA--SPDE framework~\citep{villejo2023data};
    \item a gradient-boosted regression tree model (QGBRT), representing a nonparametric machine-learning baseline for spatio-temporal prediction; and
    \item a robust single-fidelity Gaussian process, implemented in the \texttt{sdmTMB} package~\citep{anderson2022sdmtmb}.
\end{enumerate}

Together, these models span classical, robust, and alternative approaches to spatiotemporal data fusion. The comparison focuses on predictive accuracy, as measured by CV-MAE and CV-RMSE, of the HF stations, as shown in Table \ref{tab:emp_results}. We evaluate predictive
performance only at HF stations, as they provide the most reliable ground truth for assessing
the benefits of multi-fidelity fusion.

\begin{table}[ht]
\centering
\caption{Predictive performance (cross-validated MAE and RMSE) across temporal windows and competing models. Best values per window are shown in bold.}
\label{tab:emp_results}
\begin{adjustbox}{max width=\textwidth}
\begin{tabular}{c cccccc}
\toprule
\textbf{Time window} 
 & \textbf{MFGP} 
 & \textbf{RMFGP 1} 
 & \textbf{RMFGP 2} 
 & \textbf{INLA--SPDE} 
 & \textbf{QGBRT} 
 & \textbf{Robust SF} \\
\midrule
\multicolumn{7}{c}{\textbf{Cross-validated MAE}} \\
\midrule
1  & 4.80 & \textbf{1.43} & \textbf{1.43} & 6.86 & 5.07 & 2.41 \\
2  & 39.3 & \textbf{1.91} & \textbf{1.91} & 7.28 & 7.23 & 6.50 \\
3  & 18.9 & 11.2 & 11.2 & 6.37 & 7.05 & \textbf{3.96} \\
4  & 6.16 & \textbf{1.89} & \textbf{1.89} & 4.79 & 4.78 & 3.97 \\
5  & 8.49 & \textbf{2.35} & \textbf{2.35} & 3.25 & 3.94 & 3.18 \\
6  & 3.06 & \textbf{2.40} & \textbf{2.40} & 5.96 & 8.60 & 7.69 \\
7  & 1.61 & \textbf{1.32} & \textbf{1.32} & 2.48 & 5.39 & 5.35 \\
8  & 3.14 & \textbf{1.31} & \textbf{1.31} & 2.77 & 3.81 & 2.59 \\
9  & 2.12 & \textbf{2.11} & \textbf{2.11} & 4.40 & 4.63 & 4.72 \\
10 & 4.68 & \textbf{1.91} & \textbf{1.91} & 4.15 & 9.69 & 5.59 \\
11 & 8.06 & \textbf{3.45} & \textbf{3.45} & 5.83 & 9.00 & 8.13 \\
\midrule
\multicolumn{7}{c}{\textbf{Cross-validated RMSE}} \\
\midrule
1  & 5.80 & \textbf{1.59} & \textbf{1.59} & 9.01 & 6.10 & 3.16 \\
2  & 53.2 & \textbf{2.36} & 2.58 & 10.7 & 9.19 & 8.02 \\
3  & 24.3 & 14.2 & 14.1 & 10.4 & 8.12 & \textbf{4.30} \\
4  & 7.80 & \textbf{2.27} & \textbf{2.28} & 6.57 & 5.84 & 4.95 \\
5  & 10.2 & \textbf{2.83} & \textbf{2.83} & 3.91 & 4.58 & 3.68 \\
6  & 3.71 & \textbf{3.64} & \textbf{3.60} & 7.55 & 11.55 & 10.5 \\
7  & 1.70 & \textbf{1.36} & \textbf{1.36} & 3.34 & 6.74 & 5.85 \\
8  & 4.32 & \textbf{1.68} & \textbf{1.68} & 3.72 & 5.07 & 3.61 \\
9  & \textbf{2.56} & 2.61 & 2.61 & 5.25 & 5.51 & 5.85 \\
10 & 5.12 & \textbf{2.37} & \textbf{2.38} & 4.94 & 14.81 & 6.82 \\
11 & 8.50 & \textbf{4.09} & \textbf{4.09} & 7.02 & 10.76 & 9.74 \\
\bottomrule
\end{tabular}

\end{adjustbox}
\end{table}

The results in Table~\ref{tab:emp_results} show a clear and consistent performance gain of the proposed robust multi-fidelity Gaussian process models compared with both the classical multi-fidelity MLE and the single-fidelity alternatives. Across almost all temporal windows, the two Huber-based variants (\emph{RMFGP~1} and \emph{RMFGP~2}) achieve the smallest MAE and RMSEs. The classical multi-fidelity model occasionally collapses under heavy local deviations, particularly in windows 2 and 3, where unmodeled bias or scale shifts in the low-cost sensor data lead to large residual propagation into the high-fidelity component. In contrast, the robust estimators maintain stable prediction accuracy across time, with MAE values typically below~2--3~$\mu\mathrm{g/m}^3$. Figure~\ref{fig:mae_temporal} summarises the cross-validated MAE across temporal and spatial partitions. The classical multi-fidelity estimator shows pronounced fluctuations in performance, with clear degradation in specific time windows, likely corresponding to periods affected by local contamination or transient shifts in the low-fidelity data. In contrast, both Huber-based variants remain relatively stable across time and space, showing only mild variability between folds. This spatiotemporal stability illustrates the ability of the robust estimators to mitigate the influence of anomalous LF inputs without compromising accuracy during cleaner intervals.

Among the benchmarks, the Bayesian INLA--SPDE and the robust single-fidelity model perform competitively in some periods but are generally less accurate, likely reflecting their weaker coupling structure between fidelities. Notably, the robust single-fidelity model performs significantly better for the third temporal window, compared to all other approaches. The gradient-boosted regression tree (QGBRT) exhibits substantially higher errors, suggesting that the non-parametric strategy is unable to effectively capture the multiscale temporal dependencies present in the data. Overall, the global Huber multi-fidelity estimator strikes the best balance between efficiency and robustness, combining the interpretability of a Gaussian process framework with resilience against anomalous or systematically biased low-fidelity observations.

Figure~\ref{fig:kriging} illustrates the spatiotemporal kriging abilities of the approach. We display the spatial distribution of the time-averaged $\mathrm{PM}_{2.5}$ concentration in the $11^{th}$ time window and its associated predictive uncertainty obtained from the robust multi-fidelity GP model. The left panel shows the kriging-based mean, which shows elevated concentration levels in the urban areas, Hamburg harbour, Hamburg airport, and the populated city centre areas. Notably, green areas in the south-east, around the protected nature reserve areas (Kirchwerder Wiesen), are showing lower $\mathrm{PM}_{2.5}$ concentrations. The right panel displays the corresponding standard deviation field, which reflects the spatial pattern of information density: uncertainty increases in peripheral regions, where less sensors are available and high-fidelity sensors, in particular.

\begin{figure}
    \centering
    \includegraphics[width=1\linewidth]{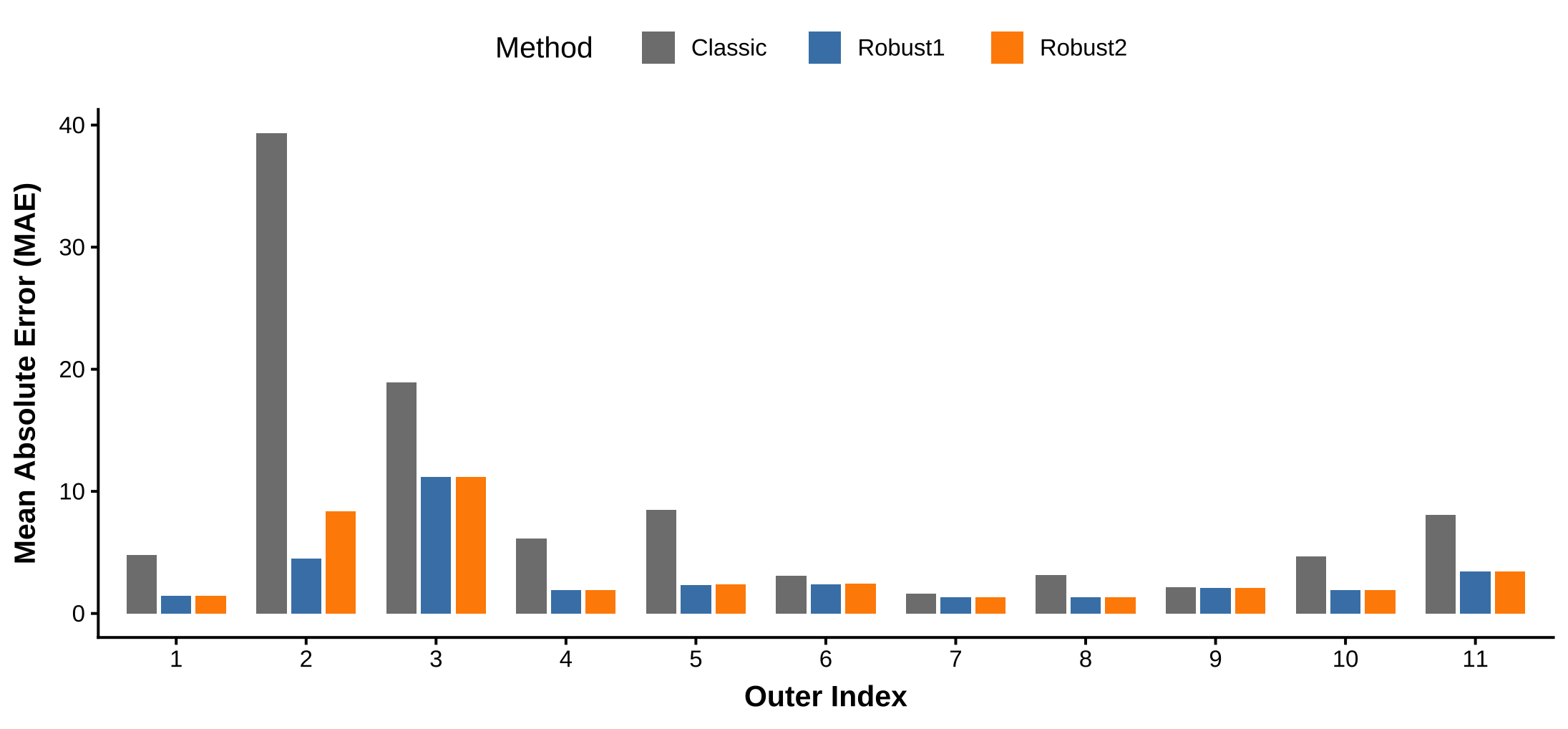}
    \includegraphics[width=1\linewidth]{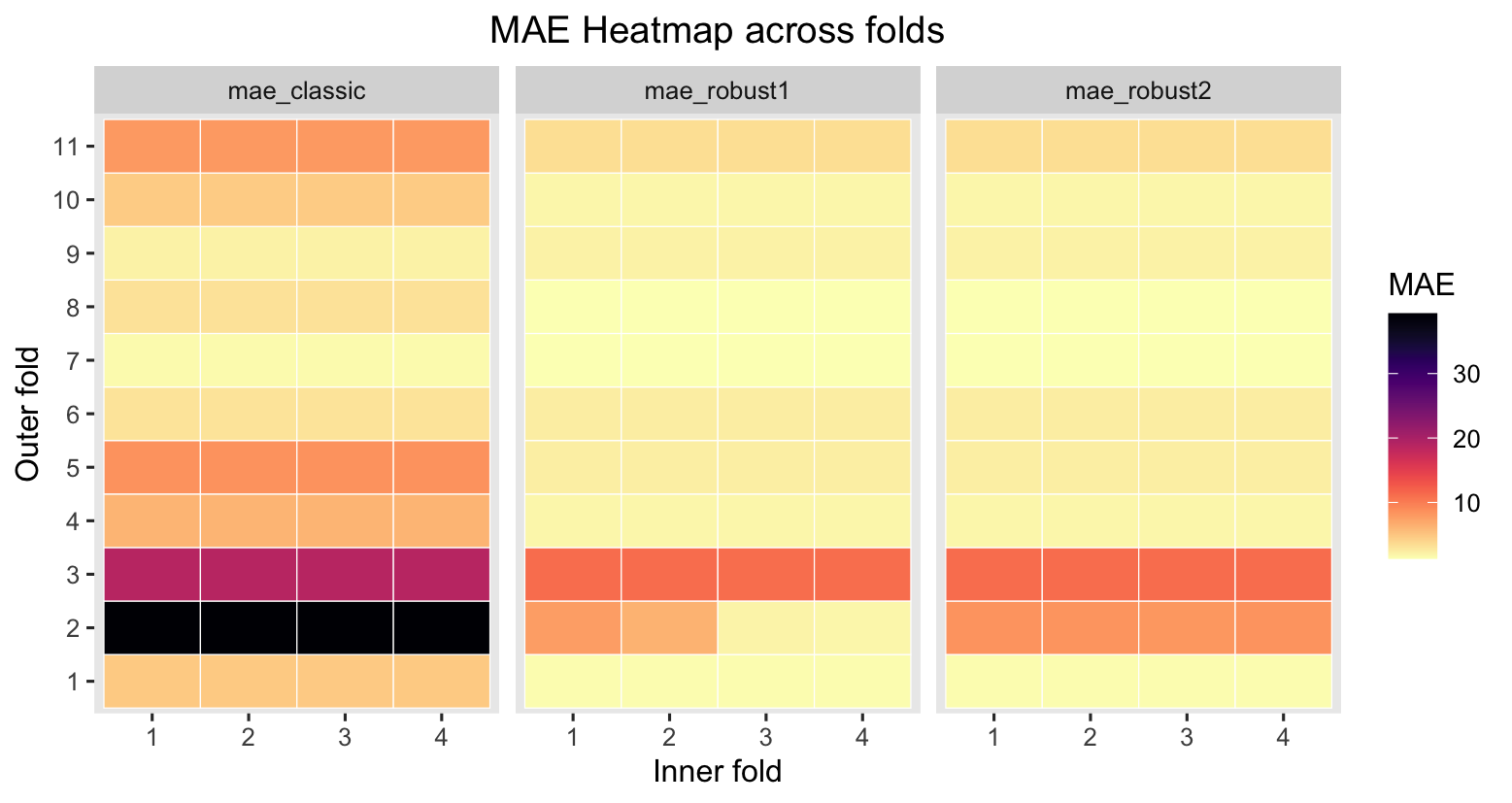}
    \caption{Comparison of mean absolute error (MAE) across temporal and spatial validation folds. The top panel shows the MAE for each temporal window (outer fold), while the bottom heatmaps display the joint variation across temporal (outer) and spatial (inner) folds.}
    \label{fig:mae_temporal}
\end{figure}

\begin{figure}
    \centering
    \includegraphics[width=1\linewidth]{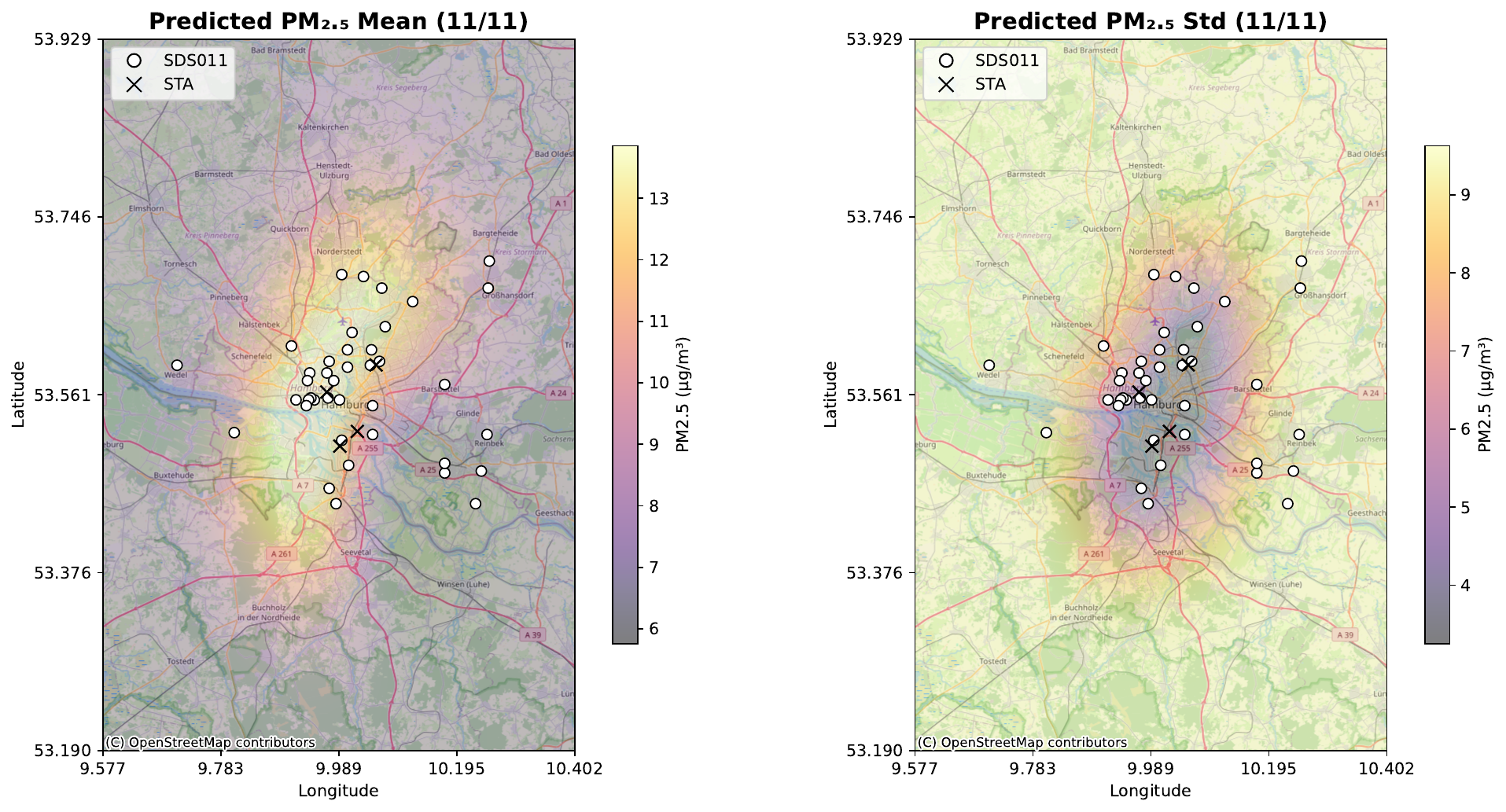}
    \caption{Spatial distribution of predicted PM${2.5}$ concentrations ($\mu$g/m³) in the Hamburg metropolitan area for the 11th estimation time window (winter period). Predictions are generated using the robust multi-fidelity Gaussian Process (GP) model described in this paper. The (left) panel displays the temporal mean PM${2.5}$ concentration across the time window. The (right) panel displays the corresponding temporal standard deviation, indicating the variability of concentrations within that window. Points indicate the locations of the monitoring stations by type used to train the model.}
    \label{fig:kriging}
\end{figure}

\section{Conclusion}\label{sec:conclusion}

Reliable data fusion between heterogeneous information sources is essential in modern environmental and engineering applications, where high-fidelity reference measurements are complemented by abundant but noisy low-cost observations, e.g., from unvalidated citizen-science measurements. However, such multi-fidelity systems are intrinsically vulnerable to contamination and systematic bias in the low-fidelity data, which can severely distort classical Gaussian process estimators. This paper addressed this challenge by developing a globally robust spatiotemporal multi-fidelity Gaussian process model that replaces the standard log-likelihood with a Huber-type loss applied to precision-weighted high-fidelity residuals. 

The proposed framework extends classical multi-fidelity co-kriging to accommodate outliers and level shifts while preserving interpretability and scalability. Theoretical analysis established that the global Huber estimator has a bounded influence function for both sparse and block-wise contamination, contrasting with the unbounded sensitivity of the Gaussian maximum likelihood estimator. Explicit bounds were derived, showing how curvature, whitening, and parameter sensitivity jointly determine robustness. 

Simulation experiments demonstrated that the robust model remains stable under increasing contamination magnitude, while the classical estimator deteriorates sharply. In a real-world case study combining official UBA monitoring stations and openSenseMap citizen-sensor data, the robust MFGP substantially improved prediction accuracy and temporal consistency, highlighting its ability to reconcile quality and coverage in heterogeneous sensor networks.  Future research could explore adaptive calibration of the robustness parameter, scalable covariance approximations, and integration for high-frequency and real-time environmental monitoring. Overall, the proposed approach provides a statistically grounded and computationally feasible extension of multi-fidelity Gaussian processes, bridging the gap between robust statistics and modern data-fusion applications.


\section*{Acknowledgment}

AI Disclosure: ChatGPT (OpenAI, GPT-4 and GPT-5.1, November 2025) was used for language editing and drafting selected paragraphs. All model outputs were critically reviewed and edited by the authors to ensure accuracy, clarity, and alignment with the scientific content. The use of the tool did not extend to data analysis, model development, or interpretation of results.

\section{Appendix}

\subsection{Data generation process}\label{sec:dgp}

\noindent
\hangindent=-5mm
\textbf{Simulation of separable spatiotemporal multi-fidelity data.} \label{Simulation}
Let the spatial domain be a $4\times 4$ integer lattice
$\mathcal{S}=\{1,\dots,4\}\times\{1,\dots,4\}$ with $N_s=4^2=16$ stations,
and the temporal grid be $t_k\in[0,1]$ for $k=1,\dots,N_t$ with $N_t=15$ equispaced points
($\Delta t = 1/(N_t-1)$). Observations are ordered with \emph{time varying fastest} within each station.
We index the $N=N_s N_t$ space–time points by $i\!\leftrightarrow\!(\mathbf{s},t)$, where
$\mathbf{s}=(s_1,s_2)\in\mathcal{S}$ and $t\in\{t_1,\dots,t_{N_t}\}$. An example of the resulting station layout and train/test partition is shown in Figure~\ref{fig:sim_stations}.

\noindent
\hangindent=-5mm
\emph{Separable kernels.}
Define stationary RBF kernels in space and time,
\[
k_s^L(\mathbf{s},\mathbf{s}')=\sigma_L^2\exp\!\Big(-\tfrac12\,
\| \mathbf{s}-\mathbf{s}' \|_{\Lambda_L^{-1}}^2\Big),\qquad
k_t^L(t,t')=\sigma_L^2\exp\!\Big(-\tfrac12\,\tfrac{(t-t')^2}{\ell_{t,L}^2}\Big),
\]
and analogously $k_s^{\delta},k_t^{\delta}$ with variance $\sigma_\delta^2$ and
length-scales $(\Lambda_\delta,\ell_{t,\delta})$.
The full spatiotemporal covariances are \emph{separable}:
\[
K_L = K_s^L \circ K_t^L,\qquad
K_\delta = K_s^{\delta} \circ K_t^{\delta},
\]
where $K_s^L\in\mathbb{R}^{N_s\times N_s}$ and $K_t^L\in\mathbb{R}^{N_t\times N_t}$ are
the spatial and temporal Gram matrices on $\mathcal{S}$ and $\{t_k\}$, respectively,
and $\circ$ denotes the Hadamard (element-wise) product after Kronecker expansion to $\mathbb{R}^{N\times N}$:
$K_s^L\mapsto K_s^L\otimes \mathbf{1}_{N_t}$,\;
$K_t^L\mapsto \mathbf{1}_{N_s}\otimes K_t^L$.

\noindent
\hangindent=-5mm
\emph{Length-scales from target nearest-neighbor correlations.}
Given a desired one-step temporal correlation $c_t=0.8$ and nearest-neighbor spatial
correlations $c_s^{(L)}=0.8$ (low-fidelity latent) and $c_s^{(\delta)}=0.95$ (high-fidelity residual),
the corresponding length-scales are set so that
\[
\exp\!\Big(-\tfrac12(d/\ell)^2\Big)=c \quad\Longrightarrow\quad
\ell=\frac{d}{\sqrt{-2\log c}}.
\]
With unit spatial grid spacing $d=1$ and temporal step $d_t=\Delta t$,
this yields $\Lambda_L=\operatorname{diag}(\ell_{s1,L}^2,\ell_{s2,L}^2)$ with
$\ell_{s1,L}=\ell_{s2,L}=\frac{1}{\sqrt{-2\log 0.8}}$,
$\Lambda_\delta$ analogously with $0.95$, and
$\ell_{t,L}=\ell_{t,\delta}=\frac{\Delta t}{\sqrt{-2\log 0.8}}$.

\noindent
\hangindent=-5mm
\emph{Generative model.}
Let $\sigma_L^2=2.0$, $\sigma_\delta^2=0.8$, noise variances
$\sigma_{\varepsilon_L}^2=0.3$, $\sigma_{\varepsilon_\delta}^2=0.3$, and autoregressive coupling $\rho=0.6$.
We draw independent latent Gaussian processes on the $N$ stacked space–time points:
\[
\mathbf{d}_L \sim \mathcal{N}\!\big(\mathbf{0},\, K_L + \epsilon I\big),\qquad
\mathbf{d}_\delta \sim \mathcal{N}\!\big(\mathbf{0},\, K_\delta + \epsilon I\big),
\]
with numerical jitter $\epsilon=10^{-8}$.
Additive noises are drawn as
$\boldsymbol{\varepsilon}_L\sim\mathcal{N}(\mathbf{0},\sigma_{\varepsilon_L}^2 I)$
and $\boldsymbol{\varepsilon}_\delta\sim\mathcal{N}(\mathbf{0},\sigma_{\varepsilon_\delta}^2 I)$,
and the low- and high-fidelity signals are
\[
\mathbf{f}_L = \mathbf{d}_L + \boldsymbol{\varepsilon}_L,\qquad
\mathbf{f}_H = \rho\,\mathbf{f}_L + \big(\mathbf{d}_\delta + \boldsymbol{\varepsilon}_\delta\big).
\]
The outputs are then tabulated as $\{(\mathbf{s}_j,t_k,f_L(\mathbf{s}_j,t_k))\}$ and
$\{(\mathbf{s}_j,t_k,f_H(\mathbf{s}_j,t_k))\}$, sorted by station then time.

\noindent
\hangindent=-5mm
\emph{Train/test split by station.}
Let $\mathcal{I}_{\text{stn}}=\{1,\dots,N_s\}$ index stations.
For a given train fraction $\varphi\in[0,1]$ (default $\varphi=0.5$),
draw uniformly without replacement a training set
$\mathcal{I}_{\text{train}}\subset\mathcal{I}_{\text{stn}}$ with
$|\mathcal{I}_{\text{train}}|=\lfloor \varphi N_s\rfloor$, and define
$\mathcal{I}_{\text{test}}=\mathcal{I}_{\text{stn}}\setminus\mathcal{I}_{\text{train}}$.
All time points at stations in $\mathcal{I}_{\text{train}}$ form the training rows
and those in $\mathcal{I}_{\text{test}}$ form the test rows for both fidelities.
The function also returns the station–coordinate lookup
$\text{loc\_id}\mapsto(s_1,s_2)$.  Find the code for generating such data at the following GitHub \url{https://github.com/Pietrostat193/Robust-Multi-fidelity-Modelling/tree/main/Simulation/MatlabSimulations/PreliminaryExperiment}.

\begin{figure}
    \centering
    \includegraphics[width=1\textwidth]{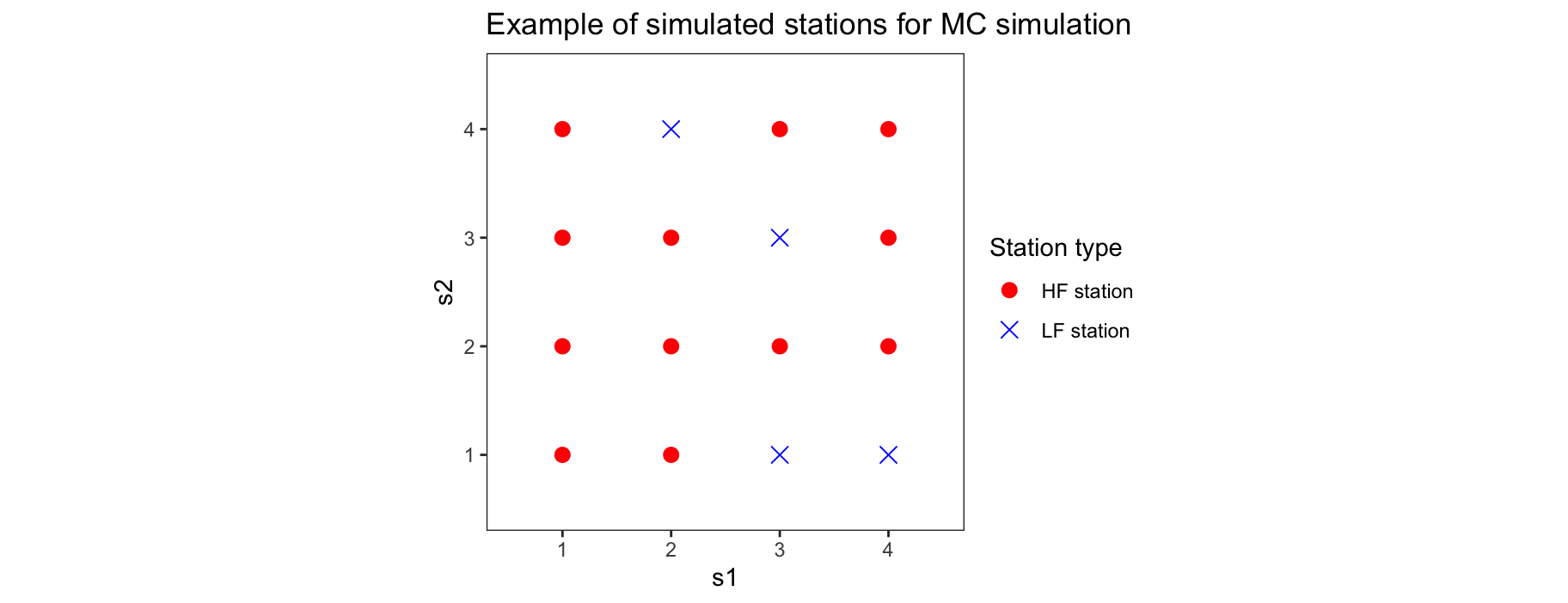}
    \caption{Example of the simulated spatial layout of stations on the $4\times 4$ integer lattice $\mathcal{S} = \{1,\dots,4\}^2$ used in the Monte Carlo experiments. Stations assigned to the training set (high- and low-fidelity observations) are shown in red, and stations reserved for testing (low-fidelity-only observations) are shown in blue. All time points at a given station share the same fidelity assignment.}
    \label{fig:sim_stations}
\end{figure}

\subsection{Proofs of the theoretical results}\label{sec:proofs}

\begin{proof}[Proof of Theorem \ref{thm:attenuation}]
Let $\|\cdot\|_{\Omega^{-1}}^2:=v^\top\Omega^{-1}v$. Given the observations $\mathbf{y}_L$ and the Gaussianity assumptions of the low- and high-fidelity process in \eqref{eq:mf_model}--\eqref{eq:joint_obs} (finite second-order moments), the fitted model implies $\mathbf{r}_H=\rho B\mathbf{r}_L+\boldsymbol{\eta}$.
The pseudo-true parameter minimises the population criterion
\[
Q(\rho)=\mathbb{E}\big\|\,\mathbf{r}_H-\rho B\mathbf{r}_L\,\big\|_{\Omega^{-1}}^2
=\mathbb{E}\!\left[(\mathbf{r}_H-\rho B\mathbf{r}_L)^\top \Omega^{-1}(\mathbf{r}_H-\rho B\mathbf{r}_L)\right],
\]
Expanding the quadratic form and differentiating with respect to $\rho$ yields
\begin{equation}
\frac{\partial Q(\rho)}{\partial \rho}
= -2\,\mathbb{E}\big[\mathbf{r}_H^\top \Omega^{-1} B \mathbf{r}_L\big]
   + 2\rho\,\mathbb{E}\big[\mathbf{r}_L^\top B^\top \Omega^{-1} B \mathbf{r}_L\big]. \label{eq:score2}
\end{equation}
Under the true data-generating process (contaminated low-fidelity process), we have $\mathbf{r}_L=\mathbf{f}_L+\mathbf{u}$) and $\mathbf{r}_H=\rho B\mathbf{f}_L+\boldsymbol{\eta}$. Centring the observed process and since $\mathbf{u}\perp(\mathbf{f}_L,\boldsymbol{\eta})$, we get that $\mathbb{E}[\mathbf{f}_L] = \mathbb{E}[\mathbf{u}]=\mathbf{0}$ and
\begin{align*}
\mathbb{E}\big[\mathbf{r}_H^\top \Omega^{-1} B \mathbf{r}_L\big]
&= \mathbb{E}\Big[(\rho B\mathbf{f}_L+\boldsymbol{\eta})^\top \Omega^{-1} B (\mathbf{f}_L+\mathbf{u})\Big]
 = \rho\,\mathbb{E}\big[\mathbf{f}_L^\top B^\top \Omega^{-1} B \mathbf{f}_L\big] \\
&= \rho\,\mathrm{tr}\!\big(B^\top \Omega^{-1} B\,C_L\big), \\[1mm]
\mathbb{E}\big[\mathbf{r}_L^\top B^\top \Omega^{-1} B \mathbf{r}_L\big]
&= \mathbb{E}\big[(\mathbf{f}_L+\mathbf{u})^\top B^\top \Omega^{-1} B (\mathbf{f}_L+\mathbf{u})\big] \\
&= \mathrm{tr}\!\big(B^\top \Omega^{-1} B\,(C_L+\Sigma_u)\big).
\end{align*}
Substituting into \eqref{eq:score2} and solving the equation yields the desired result. Finally, since $\Omega\succ 0$ and $B$ is fixed, $B^\top \Omega^{-1} B$ is positive semidefinite, and with $\Sigma_u\succeq 0$ the denominator in \eqref{eq:rho_star} is at least the numerator; thus $\kappa\in(0,1]$.
\end{proof}

\begin{proof}[Proof of Proposition \ref{prop:IF_outlier}]
If $\|\mathbf{u}\|\to\infty$, also $\|\mathbf{r}_L\|\to\infty$. From \eqref{eq:score},
\[
s_\rho(\mathbf{y})
= (\mathbf{r}_H - \rho B\mathbf{r}_L)^\top \Omega^{-1} B\mathbf{r}_L
= \mathbf{r}_H^\top \Omega^{-1} B\mathbf{r}_L
  - \rho\,\mathbf{r}_L^\top B^\top\Omega^{-1}B\,\mathbf{r}_L.
\]
The first term grows at most linearly with $\|\mathbf{r}_L\|$, while the second is quadratic. Since $B^\top\Omega^{-1}B$ is positive definite\footnote{Notice that, if $M=B^\top\Omega^{-1}B$ were only positive semidefinite (not definite), an outlier sequence with $\mathbf{r}_L$ asymptotically in $\mathrm{null}(M)$ would satisfy $\mathbf{r}_L^\top M \mathbf{r}_L=0$, so the quadratic term would not dominate and the score could grow at most linearly. This corresponds to LF directions completely uncoupled from HF via $B$. In our GP setting, however, $K_{LL}\succ0$, $\Sigma_{LL}\succ0$, and $\Omega\succ0$ imply $M\succ0$, so there is no nontrivial null space; we therefore do not consider this degenerate case.} under the covariance assumptions, $\mathbf{r}_L^\top B^\top\Omega^{-1}B\,\mathbf{r}_L \ge c\,\|\mathbf{r}_L\|^2$ for some $c>0$. Hence $|s_\rho(\mathbf{y})|\sim|\rho|\,\mathbf{r}_L^\top B^\top\Omega^{-1}B\,\mathbf{r}_L\to\infty$. Because the information constant $A=\mathbb{E}[\mathbf{r}_L^\top B^\top\Omega^{-1}B\,\mathbf{r}_L]$ is finite and positive, $|\mathrm{IF}(\mathbf{y};\rho)|=|A^{-1}s_\rho(\mathbf{y})|\to\infty$.
\end{proof}

\begin{proof}[Proof of Proposition \ref{prop:IF_shift}]
Let $\mathbf{r}_L=\mathbf{f}_L+\mathbf{b}$ with $\mathbf{b}$ defined as above. 
From \eqref{eq:score},
\[
s_\rho(\mathbf{y})
= \mathbf{r}_H^\top \Omega^{-1} B\mathbf{r}_L
  - \rho\,\mathbf{r}_L^\top B^\top\Omega^{-1}B\,\mathbf{r}_L.
\]
Substituting $\mathbf{r}_L=\mathbf{f}_L+\mathbf{b}$ and holding $\mathbf{r}_H$ fixed gives
\[
s_\rho(\mathbf{y})
= -\,\rho\,\mathbf{b}^\top B^\top\Omega^{-1}B\,\mathbf{b}
  - 2\rho\,\mathbf{f}_L^\top B^\top\Omega^{-1}B\,\mathbf{b}
  + O(1),
\]
where the first term is quadratic in $\mathbf{b}$, the second linear, and the remainder constant in $\Delta$.
Under the covariance assumptions ($K_{LL}\succ0$, $\Sigma_{LL}\succ0$, $\Omega\succ0$),
$M:=B^\top\Omega^{-1}B\succ0$, and hence
$\mathbf{b}^\top M\mathbf{b}\ge c\,\|\mathbf{b}\|^2$ for some $c>0$.
Since $\|\mathbf{b}\|^2 = |\Delta|^2|\mathcal{I}|$, the quadratic term dominates as $|\Delta|\to\infty$, giving
\[
|s_\rho(\mathbf{y})|
\sim |\rho|\,\Delta^2\,\mathbf{1}_{\mathcal{I}}^\top B^\top\Omega^{-1}B\,\mathbf{1}_{\mathcal{I}}\to\infty.
\]
Because the information is constant 
$A=\mathbb{E}[\mathbf{r}_L^\top B^\top\Omega^{-1}B\,\mathbf{r}_L]$ 
is finite and positive under the ideal model, it follows that 
$|\mathrm{IF}(\mathbf{y};\rho)|=|A^{-1}s_\rho(\mathbf{y})|\to\infty$.
\end{proof}

\begin{proof}[Proof of Theorem \ref{thm:bounded_IF_global}]
The Huber-based estimation corresponding to
\eqref{eq:ThetaHatHuber} can be written as
\[
\sum_{i=1}^{n_H} \psi_\delta(\tilde r_{H,i})\,
\frac{\partial \tilde r_{H,i}}{\partial \Theta}
= \mathbf{0},
\]
where $\psi_\delta(r)=\rho'_\delta(r)$ is the derivative (score function)
of the Huber loss. Since
$\tilde{\mathbf{r}}_H = W_{HH}^{1/2}\mathbf{r}_H$
and $\mathbf{r}_H = \mathbf{y}_H - \boldsymbol{\mu}_H(\Theta)$,
the score contribution of observation~$i$ is proportional to
$\psi_\delta(\tilde r_{H,i})$. 
By definition,
\[
|\psi_\delta(r)| =
\begin{cases}
|r|, & |r|\le \delta,\\
\delta, & |r|>\delta,
\end{cases}
\]
so each term is bounded by $\delta$ regardless of $r$. For the two shift types, we get
\begin{enumerate}
	\item When $\|\mathbf{u}\|\to\infty$ for a finite subset of LF indices, the high-fidelity residuals satisfy $\mathbf{r}_H=\rho B(\mathbf{f}_L+\mathbf{u})+\boldsymbol{\eta}
= \rho B\mathbf{f}_L+\rho B\mathbf{u}+\boldsymbol{\eta}$. After whitening, $\tilde{\mathbf{r}}_H=W_{HH}^{1/2}(\rho B\mathbf{u}+\cdots)$, so the affected components $\tilde r_{H,i}$ may diverge. However, $\psi_\delta(\tilde r_{H,i})$ saturates at $\pm\delta$, implying that the contribution of each outlier is bounded and, thus, the overall influence function remains finite.
\item For a shift $\mathbf{u}=\mathbf{b}$ with amplitude $|\Delta|\to\infty$ on an index set~$\mathcal{I}$, the induced residuals are $\mathbf{r}_H=\rho B\mathbf{b}+\text{(bounded terms)}$, and $\tilde r_{H,i}\asymp \Delta$ for $i\in\mathcal{I}$. Again, $\psi_\delta(\tilde r_{H,i})$ becomes constant in magnitude for $|\tilde r_{H,i}|>\delta$, so the corresponding score terms are bounded by $\delta$, and the influence function does not diverge.
\end{enumerate}

In both cases, the boundedness of $\psi_\delta(\cdot)$ ensures that
\[
\big\|\mathrm{IF}_H(\mathbf{y})\big\|
\;\le\; C_\delta < \infty,
\]
for some finite constant $C_\delta$ depending only on $\delta$ and the conditioning of $W_{HH}^{1/2}$ and $B$.
\end{proof}

\begin{proof}[Proof of Theorem \ref{thm:bounded_IF_global}]
The Huber-based estimation corresponding to
\eqref{eq:ThetaHatHuber} can be written as
\[
\sum_{i=1}^{n_H} \psi_\delta(\tilde r_{H,i})\,
\frac{\partial \tilde r_{H,i}}{\partial \Theta}
= \mathbf{0},
\]
where $\psi_\delta(r)=\rho'_\delta(r)$ is the derivative (score function) of the Huber loss. Since
$\tilde{\mathbf{r}}_H = W_{HH}^{1/2}\mathbf{r}_H$ and $\mathbf{r}_H = \mathbf{y}_H - \boldsymbol{\mu}_H(\Theta)$, the score contribution of observation~$i$ is proportional to $\psi_\delta(\tilde r_{H,i})$.  By definition,
\[
|\psi_\delta(r)| =
\begin{cases}
|r|, & |r|\le \delta,\\
\delta, & |r|>\delta,
\end{cases}
\]
so each term is bounded by $\delta$ regardless of $r$. For the two shift types, we get
\begin{enumerate}
	\item When $\|\mathbf{u}\|\to\infty$ for a finite subset of LF indices, the high-fidelity residuals satisfy $\mathbf{r}_H=\rho B(\mathbf{f}_L+\mathbf{u})+\boldsymbol{\eta} = \rho B\mathbf{f}_L+\rho B\mathbf{u}+\boldsymbol{\eta}$. After whitening, $\tilde{\mathbf{r}}_H=W_{HH}^{1/2}(\rho B\mathbf{u}+\cdots)$, so the affected components $\tilde r_{H,i}$ may diverge. However, $\psi_\delta(\tilde r_{H,i})$ saturates at $\pm\delta$, implying that the contribution of each outlier is bounded and, thus, the overall influence function remains finite.
\item For a shift $\mathbf{u}=\mathbf{b}$ with amplitude $|\Delta|\to\infty$ on an index set~$\mathcal{I}$, the induced residuals are $\mathbf{r}_H=\rho B\mathbf{b}+\text{(bounded terms)}$, and $\tilde r_{H,i}\asymp \Delta$ for $i\in\mathcal{I}$. Again, $\psi_\delta(\tilde r_{H,i})$ becomes constant in magnitude for $|\tilde r_{H,i}|>\delta$, so the corresponding score terms are bounded by $\delta$, and the influence function does not diverge.
\end{enumerate}

In both cases, the boundedness of $\psi_\delta(\cdot)$ ensures that
\[
\big\|\mathrm{IF}_H(\mathbf{y})\big\|
\;\le\; C_\delta < \infty,
\]
for some finite constant $C_\delta$ depending only on $\delta$ and the conditioning of $W_{HH}^{1/2}$ and $B$. 

To derive the bounds, notice that the influence function at the ideal model satisfies
\[
\mathrm{IF}_H(\mathbf y)\ =\ -\,J(\Theta_0)^{-1}\,S(\Theta_0;\mathbf y).
\]
Thus,
\begin{equation}
\|\mathrm{IF}_H(\mathbf y)\|\ \le\ \|J(\Theta_0)^{-1}\|_2\ \|S(\Theta_0;\mathbf y)\|.
\label{eq:IF-bound-step1}
\end{equation}
We now bound $\|S(\Theta_0;\mathbf y)\|_2$. By the definition of $S$ and using the Huber score, we get
\begin{equation}
	\|S(\Theta_0;\mathbf y)\|
\ \le\ \sum_{i=1}^{n_H}\big|\psi_\delta(\tilde r_{H,i}(\Theta_0))\big|\,\|g_i(\Theta_0)\| \le\ \delta\ \sum_{i=1}^{n_H}\|g_i(\Theta_0)\|. \label{eq:S-bound}
\end{equation}
Combining \eqref{eq:IF-bound-step1} and \eqref{eq:S-bound} yields \eqref{eq:Cdelta-general}.
\end{proof}

\begin{proof}[Proof of Lemma \ref{lem:alt-bounds-general}]
We first express the derivative of each whitened residual using the product rule. 
Recall that 
\(
\tilde{\mathbf r}_H(\Theta)=W_{HH}^{1/2}(\Theta)\,\mathbf r_H(\Theta)
\)
with 
\(
\mathbf r_H(\Theta)=\mathbf y_H-\boldsymbol{\mu}_H(\Theta).
\)
Differentiating componentwise gives
\[
g_i(\Theta)
= \frac{\partial \tilde r_{H,i}(\Theta)}{\partial \Theta^\top}
= e_i^\top\!\left(\frac{\partial W_{HH}^{1/2}}{\partial \Theta}\,\mathbf r_H(\Theta)
- W_{HH}^{1/2}(\Theta)\,\frac{\partial \boldsymbol{\mu}_H(\Theta)}{\partial \Theta}\right),
\]
where $e_i$ denotes the $i$th unit vector.

Applying the triangle inequality and the submultiplicativity of the spectral norm, 
\[
\|g_i(\Theta)\|
\le 
\Big\|\frac{\partial W_{HH}^{1/2}}{\partial \Theta}\Big\|
\|\mathbf r_H(\Theta)\|
+ 
\|W_{HH}^{1/2}(\Theta)\|
\Big\|\frac{\partial \boldsymbol{\mu}_H(\Theta)}{\partial \Theta}\Big\|.
\]
This inequality separates the contribution of (i) the change in the whitening operator
and (ii) the sensitivity of the model mean. 

Now, by assumption on the Lipschitz constants and the compactness of $\Xi$,
\[
\Big\|\frac{\partial W_{HH}^{1/2}}{\partial \Theta}\Big\| \le L_W, 
\quad
\Big\|\frac{\partial \boldsymbol{\mu}_H}{\partial \Theta}\Big\| \le L_\mu, 
\quad
\|W_{HH}^{1/2}\| \le \kappa_W,
\quad
\text{and}\quad
\|\mathbf r_H(\Theta)\| \le R.
\]
Substituting these uniform bounds gives
\[
\|g_i(\Theta)\| \le L_W R + \kappa_W L_\mu,
\quad\text{for all } i=1,\ldots,n_H.
\]

Finally, inserting this result into the general influence bound~\eqref{eq:Cdelta-general}
from Theorem~\ref{thm:bounded_IF_global} gives
\[
C_\delta \le \|J^{-1}\|_2\,\delta\,n_H(L_W R + \kappa_W L_\mu),
\]
which completes the proof.
\end{proof}

\begin{proof}[Proof of Lemma \ref{lem:alt-bounds-fixed}]
With $W_{HH}^{1/2}$ held fixed, the whitened residuals are
$\tilde{\mathbf r}_H(\Theta)=W_{HH}^{1/2}\,[\mathbf y_H-\boldsymbol{\mu}_H(\Theta)]$.
Differentiating gives
\[
g_i(\Theta)
= \frac{\partial \tilde r_{H,i}(\Theta)}{\partial \Theta^\top}
= -\,e_i^\top W_{HH}^{1/2}\,\frac{\partial \boldsymbol{\mu}_H(\Theta)}{\partial \Theta},
\]
so the Jacobian with rows $g_i(\Theta)^\top$ is
$G(\Theta)=-\,W_{HH}^{1/2}\,\frac{\partial \boldsymbol{\mu}_H}{\partial \Theta}$.

To bound $\sum_i \|g_i(\Theta)\|$, first apply Cauchy--Schwarz to the vector of row norms:
\[
\sum_{i=1}^{n_H}\|g_i(\Theta)\|
\;\le\; \sqrt{n_H}\,\Big(\sum_{i=1}^{n_H}\|g_i(\Theta)\|^2\Big)^{1/2}
\;=\; \sqrt{n_H}\,\|G(\Theta)\|_F.
\]
Next, use the submultiplicativity $\|AB\|_F \le \|A\|\,\|B\|_F$ with
$A=W_{HH}^{1/2}$ and $B=\partial \boldsymbol{\mu}_H/\partial \Theta$ to obtain
\[
\|G(\Theta)\|_F
= \big\|W_{HH}^{1/2}\,\tfrac{\partial \boldsymbol{\mu}_H}{\partial \Theta}\big\|_F
\;\le\; \|W_{HH}^{1/2}\|_2\,
\Big\|\frac{\partial \boldsymbol{\mu}_H}{\partial \Theta}\Big\|_F.
\]
Combining the two displays yields the claimed bound
\[
\sum_{i=1}^{n_H}\|g_i(\Theta)\|_2
\;\le\; \sqrt{n_H}\,\|W_{HH}^{1/2}\|_2\,
\Big\|\frac{\partial \boldsymbol{\mu}_H}{\partial \Theta}\Big\|_F,
\]
which is \eqref{eq:sumgi-fixed}. Substituting this into the general influence bound
\eqref{eq:Cdelta-general} from Theorem~\ref{thm:bounded_IF_global} immediately gives
\eqref{eq:Cdelta-fixedW-split}.
\end{proof}

\subsection{Additional empirical results}\label{sec:case_study}

\begin{table}
\centering
\caption{Descriptive PM\textsubscript{2.5} Statistics by ID}\label{tab:descriptive_full}
\begin{adjustbox}{max width=\textwidth}
\begin{tabular}{lrrrrrrrr}
\toprule
\textbf{Station ID} & \textbf{Count} & \textbf{Min} & \textbf{Max} & \textbf{Mean} & \textbf{Std.\ Error} & \textbf{95\% CI (Lower)} & \textbf{95\% CI (Upper)} \\
\midrule
1004 & 363 & 0.605 & 32.595 & 6.360 & 0.2695 & 5.832 & 6.888 \\
1070 & 325 & 0.248 & 30.943 & 4.624 & 0.2937 & 4.048 & 5.200 \\
1088 & 356 & 1.487 & 35.346 & 8.828 & 0.3165 & 8.207 & 9.448 \\
1134 & 365 & 1.603 & 40.858 & 12.847 & 0.3627 & 12.136 & 13.558 \\
1523 & 365 & 0.494 & 37.467 & 7.942 & 0.3493 & 7.257 & 8.626 \\
1577 & 348 & 1.612 & 244.986 & 9.610 & 0.7743 & 8.093 & 11.128 \\
1637 & 137 & 1.459 & 28.542 & 7.360 & 0.4471 & 6.484 & 8.237 \\
1682 & 364 & 0.867 & 44.103 & 9.922 & 0.4052 & 9.128 & 10.716 \\
1697 & 365 & 0.893 & 41.426 & 9.217 & 0.3503 & 8.531 & 9.904 \\
1735 & 365 & 0.403 & 14.229 & 1.085 & 0.0425 & 1.002 & 1.169 \\
239 & 365 & 1.947 & 38.418 & 10.372 & 0.3363 & 9.712 & 11.031 \\
2448 & 260 & 2.151 & 42.528 & 9.280 & 0.4053 & 8.486 & 10.075 \\
2470 & 365 & 2.522 & 40.299 & 10.883 & 0.3456 & 10.205 & 11.560 \\
2548 & 362 & 0.082 & 31.482 & 5.227 & 0.2751 & 4.687 & 5.766 \\
2558 & 364 & 2.115 & 29.058 & 8.561 & 0.2375 & 8.096 & 9.027 \\
2568 & 356 & 2.683 & 47.466 & 11.413 & 0.4262 & 10.578 & 12.249 \\
3022 & 318 & 1.085 & 48.024 & 10.546 & 0.4288 & 9.706 & 11.387 \\
317 & 351 & 0.798 & 41.998 & 8.178 & 0.3101 & 7.570 & 8.786 \\
3185 & 365 & 1.090 & 39.012 & 9.208 & 0.3625 & 8.498 & 9.919 \\
3274 & 365 & 1.008 & 33.389 & 8.122 & 0.2723 & 7.589 & 8.656 \\
3297 & 365 & 1.105 & 46.556 & 11.138 & 0.4470 & 10.262 & 12.014 \\
3503 & 365 & 0.741 & 36.048 & 7.908 & 0.3157 & 7.289 & 8.526 \\
353 & 364 & 1.804 & 40.771 & 9.997 & 0.3529 & 9.306 & 10.689 \\
3587 & 328 & 0.699 & 72.211 & 9.606 & 0.4486 & 8.727 & 10.485 \\
3629 & 358 & 0.990 & 39.059 & 8.901 & 0.3230 & 8.268 & 9.534 \\
3673 & 17 & 1.778 & 23.174 & 8.951 & 1.5954 & 5.824 & 12.078 \\
3825 & 111 & 0.782 & 27.220 & 6.818 & 0.5411 & 5.757 & 7.878 \\
3881 & 359 & 0.769 & 36.731 & 8.804 & 0.3872 & 8.045 & 9.563 \\
3997 & 344 & 0.933 & 44.956 & 10.665 & 0.4556 & 9.772 & 11.558 \\
4074 & 365 & 1.066 & 39.595 & 9.261 & 0.3649 & 8.546 & 9.977 \\
4288 & 320 & 2.637 & 999.900 & 38.034 & 7.0579 & 24.200 & 51.867 \\
4307 & 365 & 0.614 & 22.168 & 5.584 & 0.2052 & 5.181 & 5.986 \\
4309 & 364 & 0.817 & 350.158 & 9.076 & 1.2002 & 6.724 & 11.429 \\
4317 & 364 & 1.127 & 50.763 & 11.587 & 0.4397 & 10.725 & 12.449 \\
4377 & 360 & 0.908 & 38.697 & 8.170 & 0.3517 & 7.480 & 8.859 \\
543 & 196 & 0.818 & 28.223 & 5.336 & 0.2903 & 4.767 & 5.905 \\
591 & 346 & 0.616 & 33.852 & 7.954 & 0.3257 & 7.316 & 8.593 \\
836 & 365 & 1.149 & 38.769 & 9.278 & 0.3369 & 8.617 & 9.938 \\
938 & 338 & 0.389 & 15.903 & 3.632 & 0.1373 & 3.363 & 3.902 \\
980 & 362 & 1.803 & 34.194 & 10.307 & 0.2963 & 9.726 & 10.887 \\
STA.DE\_DEHH008 & 281 & 1.500 & 40.307 & 10.639 & 0.4160 & 9.823 & 11.454 \\
STA.DE\_DEHH015 & 270 & 2.144 & 35.325 & 10.114 & 0.3843 & 9.361 & 10.867 \\
STA.DE\_DEHH059 & 289 & 2.239 & 40.320 & 11.702 & 0.3905 & 10.937 & 12.468 \\
STA.DE\_DEHH068 & 324 & 1.500 & 38.333 & 10.556 & 0.3753 & 9.821 & 11.292 \\
\bottomrule
\end{tabular}
\end{adjustbox}
\end{table}

\end{document}